\def\xsum{\mathop{\sum\nolimits'}}
\newtheorem{thm}{Theorem}[section]
\newtheorem{defi}{Definition}[section]
\newtheorem{corollary}[thm]{Corollary}
\newtheorem{prop}[thm]{Proposition}
\theoremstyle{definition}
\newtheorem{remark}[thm]{Remark}
\newtheorem{example}[thm]{Example}
\newcommand{\R}{\mathbb R}
\newcommand{\Z}{\mathbb Z}
\newcommand{\N}{\mathbb N}
\begin{document}

\title[Equidistant versus bipartite ground states for 1D fluids]{Equidistant versus bipartite ground states for 1D classical fluids at fixed particle density}

\author{Laurent B\'etermin}

\address{Institute Camille Jordan, Universit\'e Claude Bernard Lyon 1,
Villeurbanne, France}

\vskip0.3truecm

\author{Ladislav \v{S}amaj and Igor Trav\v{e}nec}

\address{Institute of Physics, Slovak Academy of Sciences, 
D\'ubravsk\'a cesta 9, 84511 Bratislava, Slovakia}
\ead{fyzitrav@savba.sk}

\vspace{10pt}
\begin{indented}
\item[]
\end{indented}

\begin{abstract}
We study the ground-state properties of one-dimensional fluids of
classical (i.e., non-quantum) particles interacting pairwisely via
a potential, at the fixed particle density $\rho$.
Restricting ourselves to periodic configurations of particles,
two possibilities are considered: an equidistant chain of particles with
the uniform spacing $A=1/\rho$ and its simplest non-Bravais modulation,
namely a bipartite lattice composed of two equidistant chains,
shifted with respect to one another.
Assuming the long range of the interaction potential, 
the equidistant chain dominates if $A$ is small enough, $0<A<A_c$.
At a critical value of $A=A_c$, the system undergoes a continuous
second-order phase transition from the equidistant chain to a bipartite lattice.
The energy and the order parameter are singular functions
of the deviation from the critical point $A-A_c$ with universal
(i.e., independent of the model's parameters) mean-field values of
critical exponents.
The tricritical point at which the curve of continuous second-order
transitions meets with the one of discontinuous first-order transitions
is determined.
The general theory is applied to the Lennard-Jones model with the $(n,m)$
Mie potential for which the phase diagram is constructed.
The inclusion of a hard-core around each particle reveals a non-universal
critical phenomenon with an $m$-dependent critical exponent.
\end{abstract}

\ams{74G65,74N05,82B26}

\vspace{2pc}
\noindent{\it Keywords}: structural phase transitions, periodic
ground state, Lennard-Jones interaction, universal and nonuniversal
critical behaviour


\maketitle

\renewcommand{\theequation}{1.\arabic{equation}}
\setcounter{equation}{0}

\section{Introduction} \label{Sec1}
In quantum mechanics, a Peierls transition \cite{Peierls91} is an instability
of the ground state (i.e., the state/configuration of minimal energy)
of a one-dimensional (1D) equally-spaced chain of ions,
with one electron per ion.
Instead of the equidistant chain, its bipartite (i.e., two equidistant
sublattices shifted with respect to one another) distortion becomes
energetically favorable: the lowering of the energy due to the modified
electron band gap outweighs the elastic energy cost of the rearranging of ions.
Our aim is to show that a similar phase transition can occur also in
the ground state of 1D fluids of classical (i.e., non-quantum) particles
interacting via pairwise potentials.  

The theory of structural phase transitions is developed
in this paper for general 1D fluids.
Its particular application is made to the typical interaction potential at
the atomic level for two particles at distance $r$,
the Lennard-Jones (LJ) one \cite{Lennard31}
\begin{equation} \label{LJ}
\forall r>0,\quad f(r) = \frac{1}{r^{12}} - \frac{2}{r^6} , 
\end{equation}
where the repulsive term $1/r^{12}$ (the repulsive exponent $12$ was
originally chosen for computational reasons) is dominant at small
distances and the attractive Van der Waals term $-2/r^6$ is dominant
at large distances.
This potential is a special case of the general $(n,m)$ Mie potential
\cite{Mie03}
\begin{equation} \label{Mie}
\forall r>0,\quad f_{nm}(r) = \frac{1}{n-m} \left(
\frac{m}{r^n}-\frac{n}{r^m} \right), \qquad n>m>1. 
\end{equation}
Here, the normalization is chosen such that the potential minimum is at
distance $r_{\min}=1$, which implies $f(r_{\min})=-1$; thus the length and energy
can be taken in arbitrary units.
The condition $n>m$ ensures that the repulsive term of order $1/r^n$
is dominant at small distances $r$ and the attractive term of order $1/r^m$
is dominant at large $r$ in such a way that $f_{nm}$ is a one-well potential.
The condition $m>1$ ensures the lattice summability of the energy
per site (i.e., the integrability of the potential at infinity).
Note that the limit $n\to m^+$ of the potential (\ref{Mie}) is well defined:
\begin{equation} \label{Miemm}
\forall r>0,\quad \lim_{n\to m^+} f_{nm}(r) = - \frac{1+m\ln r}{r^m} .
\end{equation}  
To avoid confusion with the contemporary physical literature, we shall
refer to the model of particles interacting via Mie potential to as
the (generalized) LJ model.

Proving rigorously that the ground state of a system of classical
particles with pairwise interactions is a periodic crystal turns out to be
a difficult task.
In one dimension (1D), it was proven for the LJ and similar
potentials, that the {\em global} ground state is unique and corresponds to
an equidistant array of particles in the infinite-particle limit
\cite{Gardner79,Hamrick79,Katz84,Stillinger95,Ventevogel78}.
Note that the 1D LJ model was studied, mostly numerically,
also for non-zero temperatures \cite{Lee17,Lepri05}.
In two dimensions (2D), the proof that the ground state is periodic
was accomplished for the ``idealized'' pair potential
\begin{equation}
f_{\rm id}(r) = \left\{
\begin{array}{ll}
+\infty & \mbox{if $r<1$,} \cr
-1 & \mbox{if $r=1$,} \cr
0 & \mbox{if $r>1$,}
\end{array}  
\right.  
\end{equation}
first by Heitmann and Radin in \cite{Rad2} using graph theory methods,
and recently by De Luca and Friesecke \cite{Luca17} with the aid of
a discrete Gauss-Bonnet theorem.
Let us add that the first author and Furlanetto recently gave in
\cite{BetFur24} a complete characterization of minimizers for
this problem with respect to the chosen norm on $\R^2$.
No rigorous proof about crystallization for radially symmetric pair
potential is known in three dimensions.

The crystallization problem has been reviewed in
\cite{BetDelPet,Blanc15,Bris05,Canizo24,Radin87}.
The ``crystallization conjecture'' states that most of the non-trivial
ground states of interacting systems correspond to periodic lattices,
reducing in this way the number of degrees of freedom of the general
crystallization problem.
Restricting oneself to periodic (or even Bravais) lattices,
one considers usually the energy minimization
{\em at a fixed particle density} $\rho$, in the spirit of the grandcanonical
ensemble with the chemical potential coupled to the microscopic particle number.
In practice, the particles are confined to a large domain $\Omega$ of volume
$\vert\Omega\vert$, say a torus which covers periodic configurations and
lattices, and one imposes their number $N$ to be $N=\rho\vert \Omega\vert$.
Only very few results are available in this field.
In the simplest $N=1$ case, the triangular lattice has been shown to be universally optimal (among lattices), i.e. for
any potential $f$ where $f(r)=F(r^2)$ and $F$ is a completely monotone interaction potential, in 2D \cite{Cohn07,Montgomery88},
where the same result still holds at fixed high density when the pair
potential is equivalent to a completely monotone one near the origin
\cite{Betermin15}.
Optimal lattices for non-completely monotone interaction potentials
were studied for instance in \cite{Betermin19,SunWeiZou24}.
The 2D LJ system is well understood at present.
Furthermore, Luo and Wei \cite{LuoWeiLJ22} proved that the hexagonal, rhombic,
square and rectangular structures minimize successively the interaction energy
as the particle density decreases, as conjectured by the present authors
in \cite{Betermin18,Travenec19}.
The extension to non-Bravais ``zig-zag'' structures with two, three, etc.
particles per unit cell leads to even lower interaction energies at low density
\cite{Travenec19}.
In higher dimension, it has to be noticed that there is no universal optimizer
at the fixed density in 3D \cite{Betermin17,SarStromb} and all our knowledge
is based on numerical calculations and (non-)local optimality results only \cite{Betermin23}.
Furthermore, it was proven recently that the Gosset lattice ${\bf E}_8$ and
the Leech lattice $\Lambda_{24}$ are universally optimal among all periodic
structures in dimensions 8 and 24, respectively \cite{Cohn22}.

In what follows, we restrict our list of references to the ground state of
1D fluids of classical particles interacting via a pair potential $f$,
at the fixed mean particle density $\rho$.
The crucial question in this field is whether the ground state of the
particle system is the crystal and in the affirmative case whether it
corresponds to an equidistant array of particles with the spacing $A=1/\rho$
or to a modulated structure with a unit cell involving more than one particle.
Ventevogel \cite{Ventevogel78} has proved a number of theorems about
the ground state of 1D infinite particle systems interacting with various
types of purely repulsive pairwise potentials $f$. 
In particular, he proved an equidistant ground state for any density $\rho>0$
in the case of {\em convex} repulsive potentials.
In subsequent works \cite{Ventevogel79a,Ventevogel79b}, the same result
have been obtained for the {\em non-convex} repulsive potentials
$x\mapsto\exp(-\alpha x^2)$ $(\alpha>0)$ and
$x\mapsto 1/(b^2+x^2)^n$ $(n>\frac{1}{2}, b\in \mathbb{R})$.
Repulsive potentials $f$, for which a modulated structure has a lower
energy than the equidistant one, were studied in \cite{Nijboer85}.
For some potentials like $x\mapsto 1/(1+x^4)$, this happens at certain
particle densities.
But as soon as the interaction is repulsive at small distances $x$ and
attractive at large $x$, interaction potentials which lead to modulated
incommensurate ground states are not pairwise as was shown by
by Janssen and Tjon \cite{Janssen82,Janssen83}.

In this paper, restricting ourselves to periodic ground-state configurations,
two possibilities are considered: an equidistant chain of particles with
the spacing $A=1/\rho$ and its simplest modulation, namely a bipartite
lattice composed of two equidistant (with spacing $2A$) chains
shifted with respect to one another.
The method is based on a series expansion of the energy in
integer powers of a small parameter of the bipartite chain which determines
exactly the critical point of a continuous second-order transition.
Assuming a long-range decay of the interaction potential $f$,
the equidistant chain dominates if $A$ is small enough, $0<A<A_c$.
At $A=A_c$, the system undergoes a continuous second-order phase transition
from the equidistant Bravais chain (with one particle per unit cell)
to a bipartite lattice with two particles per unit cell.
Increasing $A$ beyond $A_c$, $A>A_c$, the ground state of the system
corresponds to bipartite lattices with continuously changing shift. 
Close to $A_c$, $A\to A_c^+$, the shift between the two sublattices behaves
as $(A-A_c)^{\beta}$ with the (universal) mean-field value of
the critical exponent $\beta=\frac{1}{2}$. 
The conditions for a tricritical point at which second-order and first-order
phase transitions meet \cite{Griffiths70,Landau37,Landau80} is determined
exactly as well. 

The general theory is applied to the 1D $(n,m)$ LJ model (\ref{Mie}).
The energy per particle for the equidistant and bipartite configurations
is expressed in terms of the Riemann and Hurwitz zeta functions.
The $n\to\infty$ limit of the $(n,m)$ LJ potential (\ref{Mie})  
corresponds to the hard-core (hc) potential of type $f(r)\to\infty$
for $0<r<1$ and $f_m(r) = -1/r^m$ for $r\ge 1$.
The hc potential reflects adequately strong repulsion between particles
at small distances.
This motivates us to include in our study a hc extension of the LJ potential,
namely
\begin{equation} \label{fhc}
f_{nm}^{\rm hc}(r) = \left\{
\begin{array}{ll}
+\infty & \mbox{if $r<\sigma$} \cr
f_{nm}(r) & \mbox{if $r\ge \sigma$.}
\end{array}  
\right.  
\end{equation}
Here, the hard-core radius $\sigma$ is a new length which influences the nature
of the ground state.
Certain aspects of this model turn out to be non-universal with
an exponent dependent exclusively on the LJ parameter $m$.

\medskip

Let us now summarize our main results concerning the general bipartite configuration model and the 1D $(n,m)$ Lennard-Jones case in the following theorem. Details and proofs can be found throughout the paper.
\begin{thm}[\textbf{Summary of the main results}]
Let $f:\R\to \R$, integrable at infinity, be such that there exists a Radon measure $\mu_f$ on $(0,+\infty)$ satisfying
$$
\forall x\in \R,\quad f(x)=\int_0^\infty e^{-x^2 t}d\mu_f(t).
$$
Then, the $f$-energy of the bipartite chain parametrized by $\Delta>0$ and $A>0$ and given by
$$
L_{A,\Delta}:=2A\Z \cup \left( 2A\Z + \frac{2A\Delta}{1+\Delta}\right)
$$
is
\begin{eqnarray} 
E^{\rm bip}(A,\Delta) & = & \frac{1}{2} \int_0^{\infty} \left(
\sum_{j=-\infty}^{\infty} e^{-4j^2 A^2 t} - 1 \right) d\mu_f(t) \nonumber \\
& & + \frac{1}{4} \int_0^{\infty} \sum_{j=-\infty}^{\infty}
\left[ e^{-(a+2jA)^2 t} + e^{-(b+2jA)^2 t} \right] d\mu_f(t) , \label{biprepr2intro}
\end{eqnarray}
where the periods $a$ and $b$ are defined as
\begin{equation} \label{periodsintro}
a=a(A,\Delta) := \frac{2A\Delta}{1+\Delta} ,
\qquad b=b(A,\Delta) := \frac{2A}{1+\Delta} . 
\end{equation}
\textbf{• Lennard-Jones model.} Let $A>0$, $f=f_{nm}$, $n>m>1$, defined by (\ref{Mie}), then, as $\varepsilon\to 0$, we have
\begin{equation} \label{trans}
E_{nm}^{\rm bip}\left( A,e^{\varepsilon}\right) = E_{nm}^{\rm eq}(A)
+ E_2^{nm}(A) \varepsilon^2 + E_4^{nm}(A) \varepsilon^4 + O(\varepsilon^6), 
\end{equation}
where the energy of the equidistant configuration is given by
\begin{equation} \label{energd1intro}
E_{nm}^{\rm eq}(A) =E_{nm}^{\rm bip}\left( A,1\right)= \frac{1}{n-m} \left[ \frac{m\, \zeta(n)}{A^n}
-\frac{n\ \zeta(m)}{A^m} \right],
\end{equation}
where $\zeta$ is the Riemann zeta function, and the coefficients are given in terms of $\zeta$  by (\ref{rov1}) and (\ref{rov2}) below.
\begin{enumerate}
\item[-] \textbf{Transition point and its asymptotics.} The unique transition point $A_c^{nm}$, i.e. such that $E_2(A_c^{nm})=0$ with a change of sign at $A=A_c^{nm}$, is given by
\begin{equation} \label{astar}
A_c^{nm} = \frac{1}{2} 
\left[ \frac{(2^{2+n}-1)(1+n)\zeta(n+2)}{(2^{2+m}-1)(1+m)\zeta(m+2)}
\right]^{\frac{1}{n-m}} .  
\end{equation} 
If $A<A_c^{nm}$, then the equidistant configuration $L_{A,1}$ corresponding to $\Delta=1$ is the unique minimizer of $\Delta\mapsto E_{nm}^{\rm bip}\left( A,\Delta\right)$ while this minimizer is a bipartite configuration (i.e. $\Delta\neq 1$) when $A>A_c^{nm}$.\\
Moreover, in the limit $n\to m^+$, this relation yields
\begin{equation}
A_c^{m^+,m}:=\lim_{n\to m^+} A_c^{nm}=\exp\left( \frac{\ln 2}{2^{2+m}-1}+\frac{1}{1+m}
+\frac{\zeta'(m+2)}{\zeta(m+2)} \right),
\end{equation}
and in the hard-core limit $n\to\infty$, the critical inverse particle density
tends to the $m$-independent value
\begin{equation}
\lim_{n\to\infty} A_c^{nm} = 1.
\end{equation}
\item[-] \textbf{Absence of tricritical point.} Furthermore, there is no tricritical solution $A^t>0$ to the equations
$E_2^{nm}(A^t)=E_4^{nm}(A^t)=0$, i.e. all phase transitions in
the 1D $(n,m)$ LJ model are of second order.
\item[-] \textbf{Asymptotics in the bipartite phase.} The region of the bipartite phase is defined by $A>A_c^{nm}$:  
\begin{itemize}
\item[(1)]
for a given $A>A_c^{nm}$, there exists a nontrivial value of the parameter
$\Delta(A)\ne 1$ given by the equation
\begin{equation} \label{adelintro}
A^{m-n}=\frac{2^{n-m}[\zeta(m+1,1-\delta)-\zeta(m+1,\delta)]}{
\zeta(n+1,1-\delta)-\zeta(n+1,\delta)},\quad \delta=\frac{1}{1+\Delta}.
\end{equation}
\item[(2)] as $A\to +\infty$, we have
\begin{equation} \label{deltaasintro}
\Delta = 2A - 1 + o(1)
\end{equation}
regardless of the values of the LJ parameters $(n,m)$.
\end{itemize}
\end{enumerate}
\end{thm}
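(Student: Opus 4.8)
The plan is to establish the general representation first and then reduce the Lennard--Jones statements to identities for the Riemann and Hurwitz zeta functions together with two monotonicity lemmas.

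\emph{Representation and the $\varepsilon$-expansion.} Inserting $f(x)=\int_0^\infty e^{-x^2t}\,d\mu_f(t)$ into the $f$-energy per particle of $L_{A,\Delta}$ and applying Tonelli/Fubini (legitimate once $f$ is integrable at infinity, which bounds the mass of $\mu_f$ near $t=0$) reduces everything to enumerating pairwise distances: inside each copy of $2A\Z$ the distances are $\{2jA\}_{j\neq0}$, and between the two sublattices the distances from a point of one to all points of the other form exactly the shifted families $a+2jA$ and $b+2jA$ with $a+b=2A$; the prefactors $\tfrac12$ and $\tfrac14$ are the per-particle normalization $\tfrac1{2N}$ with $N=2$ after this enumeration. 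This gives (\ref{biprepr2intro}). For $f=f_{nm}$ one takes $d\mu_f(t)=\tfrac1{n-m}\big(\tfrac m{\Gamma(n/2)}t^{n/2-1}-\tfrac n{\Gamma(m/2)}t^{m/2-1}\big)\,dt$ from the Gamma integral for $r^{-s}$; then each lattice sum collapses to $\zeta(s)$ (the sum over $\Z\setminus\{0\}$, giving (\ref{energd1intro})), to $(1-2^{-s})\zeta(s)$-type sums over odd integers, or, after the substitution $\delta=1/(1+\Delta)$, to $\zeta(s,\delta)+\zeta(s,1-\delta)$. Writing $\Delta=e^{\varepsilon}$ gives $a=A(1+\tanh\tfrac\varepsilon2)$, $b=A(1-\tanh\tfrac\varepsilon2)$; only the cross term depends on $\varepsilon$, and Taylor-expanding the (automatically even) map $u\mapsto f((2j{+}1)A+uA)+f((2j{+}1)A-uA)$ in $u=\tanh(\varepsilon/2)$ and substituting $u(\varepsilon)=\varepsilon/2-\varepsilon^3/24+\cdots$ yields (\ref{trans}), with $E_2^{nm}$ and $E_4^{nm}$ expressed through $\sum_jf_{nm}''((2j{+}1)A)$ and $\sum_jf_{nm}^{(4)}((2j{+}1)A)$, hence through $\zeta(n{+}2),\zeta(m{+}2),\zeta(n{+}4),\zeta(m{+}4)$ --- formulas (\ref{rov1})--(\ref{rov2}). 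Term-by-term differentiation is justified by uniform convergence of the $f_{nm}^{(k)}$-sums on compact subsets of $\{|u|<1\}$.

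\emph{Transition point and its limits.} One finds $E_2^{nm}(A)=\tfrac{mn}{16(n-m)}A^{-n}\big(c_n-c_mA^{n-m}\big)$ with $c_n=(n{+}1)(2^{n+2}{-}1)\zeta(n{+}2)2^{-n-1}>0$ and similarly $c_m>0$; since $c_n-c_mA^{n-m}$ is strictly decreasing from $c_n$ to $-\infty$, $E_2^{nm}$ changes sign exactly once, at $A_c^{nm}=(c_n/c_m)^{1/(n-m)}$, which rearranges to (\ref{astar}). Writing $\nu(s)=(2^{s+2}{-}1)(s{+}1)\zeta(s{+}2)$, so $A_c^{nm}=\tfrac12[\nu(n)/\nu(m)]^{1/(n-m)}$, the limit $n\to m^+$ is an L'Hôpital computation: $\log(2A_c^{nm})\to(\log\nu)'(m)=\tfrac{2^{m+2}\ln2}{2^{m+2}-1}+\tfrac1{m+1}+\tfrac{\zeta'(m+2)}{\zeta(m+2)}$, and subtracting $\ln2$ turns the first term into $\tfrac{\ln2}{2^{m+2}-1}$, i.e. the stated value. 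The limit $n\to\infty$ follows from $\nu(n)\sim2^{n+2}(n{+}1)$, so $(n{-}m)^{-1}\log\nu(n)\to\ln2$ and $A_c^{nm}\to1$.

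\emph{Global minimiser over $\Delta$, and the bipartite phase.} In the variable $\delta=1/(1+\Delta)$ one has $E_{nm}^{\rm bip}(A,\delta)=C(A)+\tfrac1{2(n-m)}\big[\tfrac m{(2A)^n}(\zeta(n,\delta)+\zeta(n,1-\delta))-\tfrac n{(2A)^m}(\zeta(m,\delta)+\zeta(m,1-\delta))\big]$, and since $\partial_\delta\zeta(s,\delta)=-s\zeta(s{+}1,\delta)$,
$$\partial_\delta E_{nm}^{\rm bip}(A,\delta)=\frac{mn}{2(n-m)}\Big(\frac{h_n(\delta)}{(2A)^n}-\frac{h_m(\delta)}{(2A)^m}\Big),\qquad h_s(\delta):=\zeta(s{+}1,1-\delta)-\zeta(s{+}1,\delta).$$
Each summand $(j{+}1{-}\delta)^{-s-1}-(j{+}\delta)^{-s-1}$ of $h_s(\delta)$ has the sign of $\delta-\tfrac12$, so $h_s(\tfrac12)=0$, $h_s>0$ on $(\tfrac12,1)$, $h_s<0$ on $(0,\tfrac12)$, and the nontrivial critical points of $E_{nm}^{\rm bip}(A,\cdot)$ are exactly the solutions of $h_n(\delta)/h_m(\delta)=(2A)^{n-m}$, which is (\ref{adelintro}). \textbf{The crux is the monotonicity lemma:} $\psi:=h_n/h_m$ is strictly increasing on $(\tfrac12,1)$. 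I would prove it from the integral representation $h_s(\delta)=\Gamma(s{+}1)^{-1}\int_0^\infty x^s\,\phi(x)\,dx$ with $\phi(x)=\sinh((\delta{-}\tfrac12)x)/\sinh(\tfrac x2)\ge0$: then $\mathrm{sgn}\,\psi'(\delta)$ is the sign of $\iint(x_1^nx_2^m-x_1^mx_2^n)\big(\partial_\delta\phi(x_1)\phi(x_2)-\phi(x_1)\partial_\delta\phi(x_2)\big)\,dx_1\,dx_2$, and because $x\mapsto x^{n-m}$ and $x\mapsto\partial_\delta\phi(x)/\phi(x)=x\coth((\delta{-}\tfrac12)x)$ are both increasing (the latter since $\tfrac12\sinh2y>y$), the two bracketed factors always share the sign of $x_1-x_2$, forcing $\psi'>0$. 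Granting this, $\psi$ maps $(\tfrac12,1)$ increasingly onto $(\psi(\tfrac12^+),\infty)$ with $\psi(\tfrac12^+)=(2A_c^{nm})^{n-m}$ and $\psi(\delta)\to\infty$ as $\delta\to1^-$; hence for $A<A_c^{nm}$ there is no nontrivial critical point, $\partial_\delta E_{nm}^{\rm bip}>0$ on $(\tfrac12,1)$, and (since $E_{nm}^{\rm bip}$ is symmetric in $\delta\leftrightarrow1-\delta$ and diverges like $\delta^{-n}$ at the endpoints) $\delta=\tfrac12$, i.e. $\Delta=1$, is the unique minimiser; whereas for $A>A_c^{nm}$ there is a unique $\delta^\ast\in(\tfrac12,1)$, which with its reflection $1-\delta^\ast$ gives the non-equidistant minimiser and turns $\delta=\tfrac12$ into a local maximum --- consistent with $E_2^{nm}(A)<0$ there. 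Finally, $h_s(\delta)=(1-\delta)^{-s-1}+O(1)$ as $\delta\to1^-$ gives $\psi(\delta)=(1-\delta)^{-(n-m)}(1+o(1))$, so $\psi(\delta^\ast)=(2A)^{n-m}$ yields $1-\delta^\ast=\tfrac1{2A}(1+o(1))$ as $A\to\infty$; on the branch $\Delta>1$, corresponding to the reflected critical value $1-\delta^\ast$, this is $\Delta=(1-\delta^\ast)^{-1}-1=2A-1+o(1)$, uniformly in $(n,m)$.

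\emph{Absence of a tricritical point.} Since $E_2^{nm}$ vanishes only at $A_c^{nm}$, it suffices to show $E_4^{nm}(A_c^{nm})\ne0$. Using $\sum_jf_{nm}''((2j{+}1)A_c^{nm})=0$ to eliminate one zeta ratio, $E_4^{nm}(A_c^{nm})$ equals a strictly positive constant times $G(n)-G(m)$, where $G(s)=(s{+}2)(s{+}3)\,\tfrac{2^{s+4}-1}{2^{s+2}-1}\,\tfrac{\zeta(s+4)}{\zeta(s+2)}$. The second delicate point is then that $G$ is strictly increasing on $(1,\infty)$, which I would get from
$$(\log G)'(s)=\frac1{s+2}+\frac1{s+3}-\frac{3\ln2\cdot2^{s+2}}{(2^{s+2}{-}1)(2^{s+4}{-}1)}+\Big(\frac{\zeta'(s{+}4)}{\zeta(s{+}4)}-\frac{\zeta'(s{+}2)}{\zeta(s{+}2)}\Big)>0,$$
where the last bracket is positive because $x\mapsto\zeta'(x)/\zeta(x)=-\sum_{k\ge2}\Lambda(k)k^{-x}$ is increasing, and the negative term is dominated by $\tfrac1{s+3}$ since $\tfrac{2^{s+2}}{(2^{s+2}-1)(2^{s+4}-1)}<2^{-s-2}$ while $\tfrac2{s+3}>\tfrac{3\ln2}4\,2^{-s}$ for $s>1$. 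Hence $G(n)\ne G(m)$ for $n>m>1$, so $E_4^{nm}(A_c^{nm})\ne0$ and every transition is of second order. The main obstacles throughout are precisely these two ``monotonicity of a zeta ratio'' statements; the remaining ingredients --- the Gaussian representation, the Taylor expansion, and the standard asymptotics of $\zeta$ and $\zeta(\,\cdot\,,q)$ --- are routine.
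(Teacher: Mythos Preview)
Your argument is correct and covers all parts of the theorem. The derivation of the bipartite energy, the $\varepsilon$-expansion (your $\tanh(\varepsilon/2)$ parametrisation is equivalent to the paper's direct Taylor expansion of $e^{-(a+2jA)^2t}+e^{-(b+2jA)^2t}$), the computation of $A_c^{nm}$ and its two limits, and the large-$A$ asymptotics $\Delta=2A-1+o(1)$ all match the paper's route essentially verbatim.

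There are two genuine methodological differences worth noting. First, for the global statement that $\Delta=1$ is the \emph{unique} minimiser for $A<A_c^{nm}$ and that a nontrivial $\Delta$ takes over for $A>A_c^{nm}$, the paper relies on the local second-order analysis near $A_c^{nm}$ (imported from \cite{Betermin24}) together with the critical-point equation, and then asserts the global picture. Your approach is sharper: the integral representation $h_s(\delta)=\Gamma(s{+}1)^{-1}\int_0^\infty x^s\sinh((\delta-\tfrac12)x)/\sinh(\tfrac{x}{2})\,dx$ combined with the Chebyshev-type symmetrisation (both $x\mapsto x^{n-m}$ and $x\mapsto x\coth((\delta-\tfrac12)x)$ increasing) gives strict monotonicity of $\psi=h_n/h_m$ on $(\tfrac12,1)$, from which the full global classification of critical points follows cleanly. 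This is a real strengthening of what the paper writes down. Second, for the absence of a tricritical point the paper shows that $g(s)=\frac{(2^{s+2}-1)\zeta(s+2)}{(2^{s+4}-1)(s+2)(s+3)\zeta(s+4)}$ is strictly decreasing simply by factoring it as a product of three visibly decreasing positive functions $\frac{2^{s+2}-1}{2^{s+4}-1}$, $\frac{1}{(s+2)(s+3)}$, and $\frac{\zeta(s+2)}{\zeta(s+4)}$; your log-derivative estimate on $G=1/g$ reaches the same conclusion but is more laborious. Both arguments are valid; the paper's factorisation is the cleaner one here.
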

\medskip

\textbf{Plan of the paper.} The article is organized as follows.
In section \ref{Sec2}, a general formalism is developed for a phase transition
from the equidistant to bipartite ground states for 1D systems of particles
interacting pairwisely.
The application of the general method to the 1D $(n,m)$ LJ model with
the Mie potential (\ref{Mie}) is presented in section \ref{Sec3}. 
The phase diagram is constructed for various values of the $(n,m)$ parameters.
The inclusion of the hard-core with radius $\sigma$ and the corresponding
non-universal behaviour of the particle system are discussed in section
\ref{Sec4}.

\renewcommand{\theequation}{2.\arabic{equation}}
\setcounter{equation}{0}

\section{General formalism} \label{Sec2}
In this section, we present the general formalism for 1D fluids of classical
particles interacting pairwisely via a potential.
This includes the studied types of interaction potentials, the two types
of periodic ground-state configurations of particles considered in this
paper and the corresponding explicit formulas for the energy per particle.

\subsection{Interaction potentials and general energy} \label{Sec20}

As is usual in the literature, basically since the work of Cohn and Kumar
\cite{Cohn07}, we consider interaction potentials
$f:\mathbb{R}\to \mathbb{R}$ which can be written as $f(x)=g(x^2)$
where $g$ -- which is integrable at infinity -- is the Laplace transform of
a Radon measure $\mu_f$ on $(0,+\infty)$.
\begin{defi}[\textbf{Admissible potentials}]
Let $\varepsilon>0$. We say that $f:\R\to \R$ is admissible, and we write $f\in \mathcal{F}$
if $f(x)=O(|x|^{-1-\varepsilon})$, as $|x|\to +\infty$, and if there exists a Radon measure $\mu_f$ on $(0,+\infty)$ such that, for all $ x\in \R^*$,
\begin{equation} \label{intpot}
f(x) = \int_0^{\infty} e^{-x^2 t} d\mu_f(t).
\end{equation}
\end{defi}
\begin{remark}[\textbf{Symmetry}]
The general interaction potential of two particles with coordinates
$x_1\in \R$ and $x_2\in \R$, $f(x_1-x_2)$, depends on the distance
$\vert x_1-x_2\vert$ only, so it exhibits the symmetry $f(x)=f(-x)$.
\end{remark}

\begin{example}[\textbf{The Riesz potential}]
For the Riesz potential $1/r^s$, $s>1$, which is relevant in this work,
it holds that
\begin{equation} 
f(x) = \frac{1}{\vert x\vert^s} , \qquad  d\mu_f(t)
= \frac{1}{\Gamma(s/2)} t^{s/2-1} dt  
\end{equation}
with $\Gamma$ being the Euler Gamma function.\\
Notice that for $s>1$, this potential is admissible. It is however possible to define an energy, in the sense of the next definition, when $s<1$, using the analytic continuation of the zeta functions involved in the formulas, see Remark \ref{rmk_analytic}.
\end{example}
\begin{defi}[\textbf{General $f$-energy of a periodic configuration}]
Let $f\in \mathcal{F}$, $N\in \N\backslash \{0\}$, $I\subset \Z$ finite interval of
integers and $X=\{x_j\}_{j\in \Z}\subset \R$ be an infinite configuration
which is $N$-periodic in the following sense
$$
\forall j\in \Z, \quad \exists ! (i,k)\in I\times \Z,\quad x_j=x_i+k N .
$$
Then the general $f$-energy of $X$ is defined by
$$
E(X)=\frac{1}{2\sharp I}\sum_{i\in I} \sum_{j\in \Z\backslash \{i\}} f(x_i-x_j),
$$
where the prefactor $\frac{1}{2}$ is present because each interaction energy
is shared by a pair of particles.
\end{defi}

\begin{remark}
In this paper, we will sometimes use the compact notation $\displaystyle \sum_{j\in \Z\backslash \{0\}}=\displaystyle \xsum_{j\in \Z}$.
\end{remark}

\subsection{Equidistant configurations} \label{Sec21}

Let us first consider an infinite equidistant array of particles .
\begin{prop}[\textbf{Energy of an equidistant configuration}]
Let $I=\{0\}$ and  $X_A:=\{j A\}_{j\in \Z}$, where $A$ the lattice spacing (or the inverse
particle density $1/\rho=A$), then
\begin{equation} \label{eqenergy}
E^{\rm eq}(A) :=E(X_A)= \frac{1}{2} \xsum_{j\in \Z} f(\vert j\vert A) 
= \frac{1}{2} \int_0^{\infty} \left[ \theta_3\left( e^{-A^2 t}\right) - 1\right]
d\mu_f(t) ,
\end{equation}
where
\begin{equation}
\theta_3(q) = \sum_{j\in \Z} q^{j^2}
\end{equation}
denotes a Jacobi theta function with zero argument.
\end{prop}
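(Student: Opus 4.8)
The plan is to unwind the definitions and then recognize a theta function; the statement is essentially a computation passing through the Gaussian representation of $f$. First, taking $I=\{0\}$ (one particle per period) makes $\sharp I=1$, so the general $f$-energy specializes to $E(X_A)=\tfrac12\sum_{j\in\Z\setminus\{0\}}f(0-jA)=\tfrac12\xsum_{j\in\Z}f(-jA)$. Invoking the symmetry $f(x)=f(-x)$ and pairing the indices $j$ and $-j$ rewrites this as $\tfrac12\xsum_{j\in\Z}f(\vert j\vert A)$, the middle expression; note that this series converges absolutely because admissibility forces $f(x)=O(\vert x\vert^{-1-\varepsilon})$, and that $f$ is only evaluated at the nonzero points $jA$, where the representation (\ref{intpot}) is valid.

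Second, substitute $f(\vert j\vert A)=\int_0^\infty e^{-j^2A^2t}\,d\mu_f(t)$ and interchange the sum over $j$ with the integral against $\mu_f$. This exchange is the only step that is not pure bookkeeping. When $\mu_f$ is a nonnegative measure it follows from Tonelli's theorem: each term $e^{-j^2A^2t}$ is nonnegative, the partial sums increase to $\theta_3(e^{-A^2t})-1$, and the limiting integral $\int_0^\infty(\theta_3(e^{-A^2t})-1)\,d\mu_f(t)$ is finite because it equals $\xsum_{j\in\Z}f(\vert j\vert A)=2E^{\rm eq}(A)<\infty$. For a signed Radon measure one runs the same estimate on the Jordan decomposition $\mu_f=\mu_f^+-\mu_f^-$ (equivalently, one checks $\sum_{j}\int_0^\infty e^{-j^2A^2t}\,d\vert\mu_f\vert(t)<\infty$) and then applies Fubini.

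Third, having justified the interchange, write $\tfrac12\xsum_{j\in\Z}\int_0^\infty e^{-j^2A^2t}\,d\mu_f(t)=\tfrac12\int_0^\infty\bigl(\sum_{j\in\Z}e^{-j^2A^2t}-1\bigr)\,d\mu_f(t)$, where inside the integral we have added and subtracted the missing $j=0$ term. Finally recognize $\sum_{j\in\Z}\bigl(e^{-A^2t}\bigr)^{j^2}=\theta_3(e^{-A^2t})$ directly from the stated definition of the Jacobi theta function, which gives precisely (\ref{eqenergy}). In summary, there is no genuine obstacle here: the proof is a short chain of rewritings, and the lone point deserving explicit justification is the Fubini–Tonelli exchange against the a priori signed measure $\mu_f$.
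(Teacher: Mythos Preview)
Your proof is correct and follows essentially the same route as the paper: specialize the general energy with $I=\{0\}$, use the evenness of $f$ to obtain the primed sum, substitute the Gaussian representation, swap sum and integral, and identify $\theta_3$. You are in fact slightly more explicit than the paper about the Fubini--Tonelli step (Jordan decomposition for signed $\mu_f$), whereas the paper simply invokes ``absolute summability''.
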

\begin{remark}
For further information about theta functions, see \cite{Gradshteyn}.
\end{remark}
\begin{proof}
We have (for instance) $I=\{0\}$ and therefore
$$
E(X_A)=\frac{1}{2} \sum_{j\in \Z\backslash \{0\}} f(x_0-x_j)
=\frac{1}{2} \xsum_{j\in \Z} f( j A) =\frac{1}{2} \xsum_{j\in \Z} f(\vert j\vert A).
$$
Since $f$ is integrable at infinity, it follows that, interchanging the sum and the integral by absolute summability, 
$$
E(X_A)=\frac{1}{2} \xsum_{j\in \Z} f(\vert j\vert A)=\frac{1}{2} \xsum_{j\in\Z}
\int_0^{+\infty} e^{-j^2 A^2 t} d\mu_f(t)=\frac{1}{2}
\int_0^{+\infty} \left(\xsum_{j\in \Z} e^{-j^2 A^2 t}\right) d\mu_f(t).
$$
Since
$\displaystyle\xsum_{j\in \Z} e^{-j^2 A^2 t}=\theta_3\left( e^{-A^2 t}\right)-1$,
we get the desired formula.
\end{proof}
\begin{remark}{\textbf{Behaviour of $\mu_f$ near the origin.}}
With regard to the Poisson summation formula \cite{Stein71}
\begin{equation}\label{eq_thetaPoisson}
\sum_{j\in \Z} e^{-(j+\phi)^2t} = \sqrt{\frac{\pi}{t}}
\sum_{j\in \Z} e^{2\pi{\rm i}j\phi} e^{-(\pi j)^2/t}
\end{equation}
taken with $\phi=0$ and $t\to A^2 t$, the Jacobi theta function
$\theta_3\left( e^{-A^2 t}\right)$ behaves in the limit $t\to 0$ as, from the right-hand-side of (\ref{eq_thetaPoisson}),
\begin{equation}
\theta_3\left( e^{-A^2 t}\right) =
\sqrt{\frac{\pi}{A^2 t}} + 2\sqrt{\frac{\pi}{A^2 t}}
\exp\left( - \frac{\pi^2}{A^2 t} \right) + o\left( \frac{1}{\sqrt{t}}
\exp\left( - \frac{\pi^2}{A^2 t} \right)\right)
\end{equation}
and in the limit $t\to \infty$ as, from the left-hand-side of (\ref{eq_thetaPoisson}),
\begin{equation}
\theta_3\left( e^{-A^2 t}\right) =
1 + 2 \exp\left( - A^2 t \right) + o\left(\exp\left( - A^2 t \right) \right).
\end{equation}
Consequently, the integral for the energy in (\ref{eqenergy}) converges
provided that the Radon measure is of the order, as $t\to 0^+$,
\begin{equation}
d\mu_f(t) = o\left( \frac{1}{\sqrt{t}} \right) .
\end{equation}  
\end{remark}

In dependence on the Radon measure $\mu_f$ (i.e., on the interaction
potential $f$), the energy (\ref{eqenergy}) as a function of $A$ can exhibit
a nonmonotonous behavior, with only one minimum at a specific $A_{\min}$.
In particular, we have the following characterization.

\begin{prop}[\textbf{Sufficient condition for an equidistant local minimizer}]
Let $f\in \mathcal{F}$ and $A_{\min}>0$. If
\begin{equation} \label{Amin2}
\int_0^{\infty} t \theta_3^{(1)}\left( e^{-A_{\min}^2 t}\right)
d\mu_f(t) = 0 , \nonumber \\ \int_0^{\infty} t^2
\theta_3^{(2)}\left( e^{-A_{\min}^2 t}\right) d\mu_f(t) > 0 , 
\end{equation}  
then $A\mapsto E^{\rm eq}(A)$ admits a strict local minimum at $A=A_{\min}$,
where
\begin{equation}
\theta_3^{(n)}\left( e^{-A^2t}\right):=
\frac{\partial^n}{\partial t^n} \theta_3\left( e^{-A^2 t} \right)
= (-1)^n A^{2n} \sum_{j=-\infty}^{\infty} j^{2n} e^{-j^2 A^2 t}. 
\end{equation}
\end{prop}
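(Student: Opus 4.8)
The plan is to differentiate the integral representation
\[
E^{\rm eq}(A)=\frac{1}{2}\int_0^\infty\left[\theta_3\!\left(e^{-A^2t}\right)-1\right]d\mu_f(t)
\]
twice with respect to $A$ and then to invoke the elementary second-derivative test for a strict local minimum. First I would compute the $A$-derivatives of the integrand term by term: from $\theta_3(e^{-A^2t})=\sum_{j\in\Z}e^{-j^2A^2t}$ one gets $\partial_A\theta_3(e^{-A^2t})=-2At\sum_j j^2e^{-j^2A^2t}$ and $\partial_A^2\theta_3(e^{-A^2t})=-2t\sum_j j^2e^{-j^2A^2t}+4A^2t^2\sum_j j^4e^{-j^2A^2t}$. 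Comparing with the definition $\theta_3^{(n)}(e^{-A^2t})=(-1)^nA^{2n}\sum_j j^{2n}e^{-j^2A^2t}$, so that $\sum_j j^2e^{-j^2A^2t}=-A^{-2}\theta_3^{(1)}(e^{-A^2t})$ and $\sum_j j^4e^{-j^2A^2t}=A^{-4}\theta_3^{(2)}(e^{-A^2t})$, these rewrite as
\[
\partial_A\theta_3\!\left(e^{-A^2t}\right)=\frac{2t}{A}\,\theta_3^{(1)}\!\left(e^{-A^2t}\right),\qquad
\partial_A^2\theta_3\!\left(e^{-A^2t}\right)=\frac{2t}{A^2}\,\theta_3^{(1)}\!\left(e^{-A^2t}\right)+\frac{4t^2}{A^2}\,\theta_3^{(2)}\!\left(e^{-A^2t}\right).
\]

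Next I would justify differentiating under the integral sign. Fixing $\eta>0$ with $[A_{\min}-\eta,A_{\min}+\eta]\subset(0,\infty)$, for every $A$ in this interval one has the uniform bounds $|\partial_A\theta_3(e^{-A^2t})|\le 2(A_{\min}+\eta)\,t\sum_j j^2e^{-j^2(A_{\min}-\eta)^2t}$ and $|\partial_A^2\theta_3(e^{-A^2t})|\le 2t\sum_j j^2e^{-j^2(A_{\min}-\eta)^2t}+4(A_{\min}+\eta)^2t^2\sum_j j^4e^{-j^2(A_{\min}-\eta)^2t}$. By differentiating the Poisson summation identity (\ref{eq_thetaPoisson}) one sees that these majorants are $O(t^{-1/2})$ as $t\to0^+$ and exponentially small as $t\to+\infty$, hence $\mu_f$-integrable by the behaviour of $\mu_f$ near the origin discussed after (\ref{eqenergy}). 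Dominated convergence then gives that $A\mapsto E^{\rm eq}(A)$ is twice continuously differentiable near $A_{\min}$, with
\[
\frac{d}{dA}E^{\rm eq}(A)=\frac{1}{A}\int_0^\infty t\,\theta_3^{(1)}\!\left(e^{-A^2t}\right)d\mu_f(t),
\]
\[
\frac{d^2}{dA^2}E^{\rm eq}(A)=\frac{1}{A^2}\int_0^\infty t\,\theta_3^{(1)}\!\left(e^{-A^2t}\right)d\mu_f(t)+\frac{2}{A^2}\int_0^\infty t^2\,\theta_3^{(2)}\!\left(e^{-A^2t}\right)d\mu_f(t).
\]

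Finally I would specialize to $A=A_{\min}$. The first equation in (\ref{Amin2}) states exactly that $\int_0^\infty t\,\theta_3^{(1)}(e^{-A_{\min}^2t})\,d\mu_f(t)=0$, so $(E^{\rm eq})'(A_{\min})=0$ and the first summand of $(E^{\rm eq})''(A_{\min})$ drops out; the second (strict) inequality in (\ref{Amin2}) then yields $(E^{\rm eq})''(A_{\min})=\frac{2}{A_{\min}^2}\int_0^\infty t^2\,\theta_3^{(2)}(e^{-A_{\min}^2t})\,d\mu_f(t)>0$. A second-order Taylor expansion about $A_{\min}$ gives $E^{\rm eq}(A)=E^{\rm eq}(A_{\min})+\frac12(E^{\rm eq})''(A_{\min})(A-A_{\min})^2+o\big((A-A_{\min})^2\big)>E^{\rm eq}(A_{\min})$ for all $A\neq A_{\min}$ sufficiently close to $A_{\min}$, i.e. $A_{\min}$ is a strict local minimizer. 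I expect the only genuinely delicate step to be the construction of the $\mu_f$-integrable dominating functions for the first and second $A$-derivatives: this is exactly where the growth of the theta function and of its $t$-derivatives near $t=0$ — tamed by the factor $e^{-\pi^2/(A^2t)}$ coming from Poisson summation — has to be combined with the decay $d\mu_f(t)=o(t^{-1/2})$ as $t\to0^+$ (more precisely, with the convergence of the weighted integrals appearing in the statement) in order to differentiate twice under the integral.
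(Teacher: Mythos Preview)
Your proposal is correct and follows exactly the same approach as the paper: reduce to the second-derivative test and differentiate the integral representation of $E^{\rm eq}$ under the integral sign. The paper's own proof is a two-line sketch (``equivalent to (\ref{Amin2}) by absolute summability and differentiation under the integral''), whereas you spell out the computations of $\partial_A\theta_3$ and $\partial_A^2\theta_3$, correctly identify the extra $\theta_3^{(1)}$ term in the second $A$-derivative that vanishes once the first condition in (\ref{Amin2}) is imposed, and supply the dominated-convergence justification; in substance the two arguments are identical.
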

\begin{proof}
It is clear that this local minimality occurs if
\begin{equation} \label{Amin1}
\frac{\partial E^{\rm eq}(A)}{\partial A} \Bigg\vert_{A=A_{\min}} = 0 , \qquad
\frac{\partial^2 E^{\rm eq}(A)}{\partial A^2} \Bigg\vert_{A=A_{\min}} > 0,
\end{equation}
which is equivalent to (\ref{Amin2}) by absolute summability and
differentiation under the integral.
\end{proof}

\subsection{Bipartite chain} \label{Sec22}
Next we consider that the particles form a bipartite chain where
the distance between nearest neighbors alternates as (up to a rescaling
multiplicative constant) 1, $\Delta$, 1, $\Delta$, etc. with
$\Delta$ being a free parameter of the bipartite chain \cite{Betermin23b}. 

\begin{defi}[\textbf{Bipartite chain}]
Let $\Delta>0$ and $A>0$.
Then we define the following 1D periodic configuration of points with
mean density $\rho = \frac{1}{A}$:
$$
L_{A,\Delta}:=2A\Z \cup \left( 2A\Z + \frac{2A\Delta}{1+\Delta}\right).
$$
Furthermore, we denote by $a(A,\Delta)$ and $b(A,\Delta)$ the alternating distances between the points of $L_{A,\Delta}$ (also called periods), i.e.,
\begin{equation} \label{periods}
a(A,\Delta) := \frac{2A\Delta}{1+\Delta} ,
\qquad b(A,\Delta) := \frac{2A}{1+\Delta} . 
\end{equation}
\end{defi}

\begin{remark}[\textbf{Symmetries of the bipartite chain}]
The symmetry transformation $\Delta\to 1/\Delta$ means an exchange
of the periods $a\leftrightarrow b$.
The equidistant case corresponds to $\Delta=1$, the limits $\Delta\to 0^+$
and $\Delta\to\infty$ correspond to the particle coalescence.
For $\Delta\ne 1$, the unit cell contains two particles and the period
of the particle array is $a+b=2A$.
Due to the model's symmetry $\Delta\to 1/\Delta$, $\Delta$ can be restricted
to either $(0,1]$ or $[1,\infty)$ intervals, without any loss of generality.
\end{remark}

The energy per particle for the bipartite lattice can be derived in analogy
with Ref. \cite{Betermin23b}.

\begin{prop}[\textbf{Energy of a bipartite chain}]
Let $f\in \mathcal{F}$, $A>0$ and $\Delta>0$.
Then the $f$-energy of $L_{A,\Delta}$ is $E^{\rm bip}(A,\Delta):=E(L_{A,\Delta})$
given by
\begin{eqnarray} 
E^{\rm bip}(A,\Delta) & = & \frac{1}{2} \int_0^{\infty} \left(
\sum_{j=-\infty}^{\infty} e^{-4j^2 A^2 t} - 1 \right) d\mu_f(t) \nonumber \\
& & + \frac{1}{4} \int_0^{\infty} \sum_{j=-\infty}^{\infty}
\left[ e^{-(a+2jA)^2 t} + e^{-(b+2jA)^2 t} \right] d\mu_f(t) , \label{biprepr2}
\end{eqnarray}
where the spacings $a$ and $b$ of the bipartite chain are given as
the functions of the parameters $A$ and $\Delta$ by {\rm (\ref{periods})}. 
\end{prop}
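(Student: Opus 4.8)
The plan is to reduce the double sum defining $E(L_{A,\Delta})$ to a few one–dimensional lattice sums, one for each type of inequivalent pair interaction, and then to insert the integral representation (\ref{intpot}) and exchange summation and integration exactly as in the proof of the equidistant formula (\ref{eqenergy}).

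First I would take the period $N=2A$ and the index set $I=\{0,1\}$ with representatives $x_0=0$ and $x_1=a$, where $a=a(A,\Delta)$ is the period defined in (\ref{periods}). Since $a\in(0,2A)$ is never a multiple of $2A$ for $\Delta>0$, the map $(i,k)\mapsto x_i+2kA$ is a bijection onto $L_{A,\Delta}$, so the configuration is genuinely $2A$–periodic with $\sharp I=2$ and
\[
E^{\rm bip}(A,\Delta)=\frac14\sum_{i\in\{0,1\}}\ \sum_{x\in L_{A,\Delta}\setminus\{x_i\}}f(x_i-x).
\]
Next I would list the distances seen by each representative. From $x_0=0$, the remaining points of the sublattice $2A\Z$ sit at the distances $\{\,|2jA|:j\in\Z\setminus\{0\}\,\}$, while the points of the shifted sublattice $2A\Z+a$ sit at the distances $\{\,|a+2jA|:j\in\Z\,\}$; writing $b=2A-a=b(A,\Delta)$ and using $f(-x)=f(x)$, the identity $|a-2(j+1)A|=b+2jA$ for $j\ge 0$ shows that this second multiset is exactly $\{\,a+2jA:j\ge 0\,\}\sqcup\{\,b+2jA:j\ge 0\,\}$. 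The particle $x_1=a$ sees the mirror image of this configuration (substitute $j\mapsto-j$ and use the evenness of $f$), so the two representatives contribute the same amount, namely $\sum_{j\in\Z\setminus\{0\}}f(2jA)+\sum_{j\ge0}\bigl[f(a+2jA)+f(b+2jA)\bigr]$.

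Finally I would assemble the pieces. The double sum equals twice that common contribution, so after multiplication by $\frac14$ and use of the elementary identity $\sum_{j\in\Z}f(a+2jA)=\sum_{j\in\Z}f(b+2jA)=\sum_{j\ge0}\bigl[f(a+2jA)+f(b+2jA)\bigr]$ (once more a reindexing $j\mapsto-j$ together with $a+b=2A$) one gets
\[
E^{\rm bip}(A,\Delta)=\frac12\sum_{j\in\Z\setminus\{0\}}f(2jA)+\frac14\sum_{j\in\Z}\bigl[f(a+2jA)+f(b+2jA)\bigr].
\]
Substituting $f(y)=\int_0^\infty e^{-y^2t}\,d\mu_f(t)$ into each term, interchanging the absolutely convergent sums with the integral — justified, as in the equidistant case, by the decay $f(y)=O(|y|^{-1-\varepsilon})$ which makes all the lattice sums absolutely convergent — and recognizing $\sum_{j\in\Z}e^{-4j^2A^2t}-1=\sum_{j\ne0}e^{-(2jA)^2t}$ then reproduces (\ref{biprepr2}). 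I do not anticipate a real difficulty here; the only points demanding care are the bookkeeping of the distance multisets (notably that the single shifted sublattice produces \emph{both} periods $a$ and $b$) and checking that the stated hypotheses genuinely license the exchange of sum and integral, but the latter is the same argument already carried out for $E^{\rm eq}$.
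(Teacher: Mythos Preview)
Your proposal is correct and follows essentially the same route as the paper: both arguments reduce the per-particle energy to the intermediate identity
\[
E^{\rm bip}(A,\Delta)=\frac12\xsum_{j\in\Z} f(2jA)+\frac14\sum_{j\in\Z}\bigl[f(a+2jA)+f(b+2jA)\bigr],
\]
and then insert the integral representation of $f$ and swap sum and integral. The only cosmetic difference is that the paper picks a single reference particle and splits the sum into right/left contributions (producing the $a$-sums and $b$-sums separately), whereas you use both representatives $x_0,x_1$ of the unit cell, organize by sublattice, and invoke the reflection symmetry to see they contribute equally; the reindexing identities you use to pass to bilateral sums are exactly those the paper spells out.
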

\begin{remark}[\textbf{Theta functions and energy of an equidistant configuration}]
The sums under integration can be formally expressed in terms of
the Jacobi theta function with nonzero argument \cite{Gradshteyn}
\begin{equation}
\theta_3(u,q) = \sum_{j=-\infty}^{\infty} q^{j^2} e^{2{\rm i}ju} .
\end{equation}

For the equidistant case $\Delta=1$ with $a=b=A$, the above formula
(\ref{biprepr2}) takes the form
\begin{equation} \label{eqformula} 
E^{\rm eq}(A,1) = \frac{1}{2} \int_0^{\infty} \left[
\theta_3\left( e^{-4 A^2 t}\right) - 1 + \theta_2\left( e^{-4 A^2 t}\right)
\right] d\mu_f(t) ,
\end{equation}
where
\begin{equation}
\theta_2(q) = \sum_{j=-\infty}^{\infty} q^{\left(j+{\scriptstyle\frac{1}{2}}\right)^2}
\end{equation}  
is another Jacobi theta function with zero argument \cite{Gradshteyn}. 
Using the relation
\begin{equation}
\theta_2(q) + \theta_3(q) = \theta_3\left(q^{\scriptstyle\frac{1}{4}} \right)
\end{equation}
with $q=\exp(-4A^2 t)$, the formula (\ref{eqformula})
reduces to the previous one (\ref{eqenergy}). 
\end{remark}
\begin{proof}
Let us choose one of the particles as the reference one.
First one sums the interactions with particles to the right,
say at distances $a$, $a+b=2A$, $2a+b=a+2A$, $2a+2b=4A$, etc.: 
\begin{equation}
E^{\rm bip}_R(A,\Delta) = \frac{1}{2} \left[
\sum_{j=1}^{\infty} f(2jA) + \sum_{j=0}^{\infty} f(a+2jA) \right] .  
\end{equation}
The same computation for the interactions with particles to the left at distances $b$, $a+b=2A$, $a+2b=b+2A$, $2a+2b=4A$, etc., gives 
\begin{equation}
E^{\rm bip}_L(A,\Delta) = \frac{1}{2} \left[
\sum_{j=1}^{\infty} f(2jA) + \sum_{j=0}^{\infty} f(b+2jA) \right] .  
\end{equation}
The total energy per particle $E^{\rm bip}(A,\Delta)$ is the sum
of the right and left contributions,
$E^{\rm bip}(A,\Delta) = E^{\rm bip}_R(A,\Delta) + E^{\rm bip}_L(A,\Delta)$, i.e.,
\begin{equation} \label{energybip}
E^{\rm bip}(A,\Delta) = \frac{1}{2} \left[
2 \sum_{j=1}^{\infty} f(2jA) + \sum_{j=0}^{\infty} f(a+2jA) + 
\sum_{j=0}^{\infty} f(b+2jA) \right] .  
\end{equation}

The sums on the right-hand side of the above expression are restricted
to positive integer values (including 0) of $j$ only, which is impractical
in view of general algebraic manipulations.
However, they can be extended to all (positive and negative) integers. 
The first sum can be written as
\begin{equation}
2 \sum_{j=1}^{\infty} f(2jA) = \xsum_{j=-\infty}^{\infty} f(2jA) . 
\end{equation}
To rewrite second and third sums, let us consider the following sum
\begin{equation}
\sum_{j=-\infty}^{\infty} f(a+2jA) = \sum_{j=0}^{\infty} f(a+2jA) +
\sum_{j=-\infty}^{-1} f(a+2jA) .  
\end{equation}
Since
\begin{eqnarray}
\sum_{j=-\infty}^{-1} f(a+2jA) & = & \sum_{j=1}^{\infty} f(a-2jA)
= \sum_{j=1}^{\infty} f(a-2A-2(j-1)A) \nonumber \\
& = & \sum_{j=0}^{\infty} f(-b-2jA) = \sum_{j=0}^{\infty} f(b+2jA) ,
\end{eqnarray}
it holds that
\begin{equation}
\sum_{j=0}^{\infty} f(a+2jA) + \sum_{j=0}^{\infty} f(b+2jA) 
= \sum_{j=-\infty}^{\infty} f(a+2jA) .
\end{equation}
One can prove in a similar way that
\begin{equation}
\sum_{j=0}^{\infty} f(a+2jA) + \sum_{j=0}^{\infty} f(b+2jA) 
= \sum_{j=-\infty}^{\infty} f(b+2jA) .
\end{equation}
We conclude that the energy per particle of the bipartite array
(\ref{energybip}) can be written in a symmetrized form as follows
\begin{equation} \label{biprepr1}
E^{\rm bip}(A,\Delta) = \frac{1}{2} \xsum_{j=-\infty}^{\infty} f(2jA)
+ \frac{1}{4} \sum_{j=-\infty}^{\infty} \left[ f(a+2jA) + f(b+2jA) \right] .  
\end{equation}

Finally, inserting the integral representation of the pair potential
(\ref{intpot}) into the formula for the bipartite energy (\ref{biprepr1}),
we arrive at the desired formula.
\end{proof}

\subsection{Transition between the equidistant and bipartite ground states}
\label{Sec23}
In statistical mechanics, interacting many-body systems in thermal
equilibrium can undertake, as a result of the change of external conditions
such as temperature or external fields, a transition from one state of
a medium to another one at some special transition point.
Ehrenfest's classification of phase transitions is based on the behavior
of the free energy as a function of thermodynamic variables, namely by
the lowest derivative of the free energy that is discontinuous at
the transition point \cite{Jaeger98}.
Solid/liquid/gas transitions in interacting particle systems are usually
of first order because of a discontinuous change in particle density
which is given by a first derivative of the free energy.
Second-order phase transitions in simple magnetic spin materials, which occur
at a critical temperature $T_c$, are continuous in the first derivative of
the free energy per particle $f(T,H)$ with respect to the external magnetic
field $H$, i.e. the magnetization $M\propto \partial f(T,H)/\partial H$,
but discontinuous in the second derivative of $f(T,H)$ with respect to
temperature $T$ or magnetic field $H$.

The free energy is usually defined as a function of $T$ and $M$ that we will note $f(T,M)$.
Respecting the symmetry $f(T,M)=f(T,-M)$, the expansion of the free energy
in powers of small magnetization $M\to 0$ reads as, for instance
up to fourth order,
\begin{equation} \label{Landau1}
f(T,M) = f_0(T) + f_2(T) M^2 + f_4(T) M^4 + O(M^6) ,
\end{equation}
where the coefficients $f_0(T)$, $f_2(T)$, $f_4(T)$, etc. are complicated
continuous functions of temperature $T$ which can be usually obtained only
approximately, e.g. by using Landau's mean-field theory
\cite{Bausch72,Landau37}.
The actual value of $M$ is determined by the minimization condition
for the free energy:
\begin{equation} \label{Landaumin1}
\frac{\partial}{\partial M} f(T,M) = 0 , \qquad
\frac{\partial^2}{\partial M^2} f(T,M) > 0 .
\end{equation}
Concretely, the first equation gives, as $M\to 0$,
\begin{equation} \label{min1}
2 f_2(T) M + 4 f_4(T) M^3 + O(M^5) = 0 .
\end{equation}
General experience tells us that the coefficient
$f_2(T)$ can have both $\pm$ signs:  
\begin{equation}
f_2(T) > 0 \quad \mbox{for $T>T_c$} \qquad {\rm and}
\quad f_2(T) < 0 \quad \mbox{for $T<T_c$} ,
\end{equation}
where $T_c$ is a ``critical'' temperature and $f_4(T)>0$ in the neighbourhood
of $T=T_c$.
If $T>T_c$, both $f_2(T)$ and $f_4(T)$ are positive and equation (\ref{min1})
has the only trivial solution $M=0$ corresponding to the minimum of
the free energy of the ``disordered'' phase.
If $T<T_c$, the inequalities $f_2(T)<0$ and $f_4(T)>0$ cause that
equation (\ref{min1}) has three real solutions $M\in \{0,\pm M_0\}$ where
\begin{equation}
M_0 = \sqrt{\frac{-f_2(T)}{2 f_4(T)}} .
\end{equation}  
Now the trivial solution $M=0$ corresponds to the maximum of the free energy
while the two ``symmetry broken'' solutions $M=\pm M_0$ correspond
to the minimum of the free energy.

An analogous classification scheme of phase transitions can be made for
our 1D system of interacting particles in the ground state.
The role of the free energy as a function of the temperature $T$ and
the magnetization $M$ is played by the bipartite ground-state energy
$E^{\rm bip}(A,\Delta)$ as the function of the inverse mean density $A$ and
the parameter $\Delta$ of the bipartite lattice.

The above bipartite formula for the energy (\ref{biprepr2}) can be used
to derive exact formulas for a second-order phase transition between
the equidistant and bipartite ground states, with a continuous change of
the parameter $\Delta$ from its equidistant value 1 to
a bipartite one $\Delta\ne 1$.
To account for the invariance of bipartite energy (\ref{biprepr2})
with respect to the transformation $\Delta\to 1/\Delta$, we introduce
the new parameter $\varepsilon$ via
\begin{equation}
\Delta = e^{\varepsilon} .
\end{equation}
This change of variables $\Delta\mapsto \varepsilon$ maps the interval $(0,1)$ onto negative values $(-\infty,0)$ and the interval $(1,+\infty)$ onto positive values $(0,\infty)$.
The equidistant configuration with the parameter $\Delta=1$ corresponds to $\varepsilon=0$
and the bipartite configuration with $\Delta\ne 1$ corresponds to
$\varepsilon\ne 0$.
The periods of the bipartite chain (\ref{periods}) are now given by
\begin{equation} \label{periodseps}
a(A,\varepsilon) = \frac{2A}{1+e^{-\varepsilon}} ,
\qquad b(A,\varepsilon) = \frac{2A}{1+e^{\varepsilon}}   
\end{equation}
and the bipartite energy exhibits the $\varepsilon\to -\varepsilon$ symmetry
which corresponds to an exchange of the periods $a\leftrightarrow b$.
The analytic expansion of $E^{\rm bip}(A,e^{\varepsilon})$ around the equidistant
value $\varepsilon=0$, in powers of small $\varepsilon$, thus contains
only even powers of $\varepsilon$. More precisely, we find the following.

\begin{prop}[\textbf{General expansion of the energy of a bipartite chain}]\label{prop:expansion}
For any $A>0$, we have, as $\varepsilon\to 0$, 
\begin{equation} \label{Landau2}
E^{\rm bip}(A,e^{\varepsilon}) = E^{\rm eq}(A) + E_2(A) \varepsilon^2
+ E_4(A) \varepsilon^4 + E_6(A) \varepsilon^6 + O\left( \varepsilon^8 \right) ,
\end{equation}
where the equidistant energy $E^{\rm eq}(A)$ is given by {\rm (\ref{eqformula})}
and  
\begin{eqnarray}
E_2(A) & = & - \frac{A^2}{4} \int_0^{\infty} \left[
\frac{t}{2} \theta_2\left( e^{-4 A^2 t} \right)
+ t^2 \theta_2^{(1)}\left( e^{-4 A^2 t} \right) \right] d\mu_f(t) , \nonumber\\
E_4(A) & = & \frac{A^2}{16} \int_0^{\infty} \Bigg\{
\left( \frac{t}{3} + \frac{A^2 t^2}{4} \right) \theta_2\left( e^{-4 A^2 t} \right)
\nonumber \\ & &
+ \left( \frac{2 t^2}{3} + A^2 t^3 \right) \theta_2^{(1)}\left( e^{-4 A^2 t} \right)
+ \frac{1}{3} A^2 t^4 \theta_2^{(2)}\left( e^{-4 A^2 t} \right)
\Bigg\} d\mu_f(t) , \nonumber \\
E_6(A) & = & -\frac{A^2}{4} \int_0^\infty \Biggl\{\left(\frac{17}{1440}t
+\frac{1}{48}A^2t^2 +\frac{1}{192}A^4t^3\right)\theta_2\left(e^{-4A^2t}\right)
\nonumber\\ & & + \left(\frac{17t^2}{720}+\frac{A^2t^3}{12}
+\frac{A^4t^4}{32}\right)\theta_2^{(1)}\left(e^{-4A^2t}\right) \nonumber\\
& & + \left(\frac{A^2t^4}{36} + \frac{A^4t^5}{48}\right)
\theta_2^{(2)}\left(e^{-4A^2t}\right) +\frac{A^4t^6}{360}
\theta_2^{(3)}\left(e^{-4A^2t}\right) \Biggr\} {\rm d}\mu_f(t) . \nonumber \\ & &
\end{eqnarray}
Here, for all $ n\in \N$, we define
\begin{eqnarray}
\theta_2^{(n)}\left( e^{-4 A^2t}\right) & := &
\frac{\partial^n}{\partial t^n} \theta_2\left( e^{-4 A^2 t} \right) \nonumber \\
& = & (-1)^n 4^n A^{2n} \sum_{j=-\infty}^{\infty}
\left( j+\frac{1}{2} \right)^{2n} e^{-4\left(j+\scriptstyle\frac{1}{2}\right)^2A^2 t} . \label{dertheta2}
\end{eqnarray}
\end{prop}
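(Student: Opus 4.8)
The plan is to isolate the $\varepsilon$-dependent part of the bipartite energy, reparametrise it so that the invariance $\varepsilon\to-\varepsilon$ is manifest, Taylor-expand the resulting (even) integrand, and then integrate that expansion term by term against $d\mu_f$. In (\ref{biprepr2}) the first integral does not involve the periods $a$, $b$, so it is independent of $\varepsilon$; together with $\frac12\int_0^\infty\theta_2(e^{-4A^2t})\,d\mu_f(t)$ it reconstitutes $E^{\rm eq}(A)$ through (\ref{eqformula}). Writing $\Delta=e^\varepsilon$ and putting $s:=\tanh(\varepsilon/2)$, the periods (\ref{periodseps}) become $a=A(1+s)$ and $b=A(1-s)$, so that $a+2jA=A(2j+1+s)$ and $b+2jA=A(2j+1-s)$. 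The substitution $j\mapsto-1-j$ then identifies the two sums appearing in the second integral of (\ref{biprepr2}) and also shows that
\[
\Theta(s;t):=\sum_{j\in\Z}e^{-A^2t(2j+1+s)^2},\qquad \Theta(0;t)=\theta_2(e^{-4A^2t}),
\]
is even in $s$. Hence $E^{\rm bip}(A,e^\varepsilon)=E^{\rm eq}(A)+\frac12\int_0^\infty\big(\Theta(s(\varepsilon);t)-\Theta(0;t)\big)\,d\mu_f(t)$, and the whole problem is reduced to expanding $\Theta(s(\varepsilon);t)$ in $\varepsilon$.

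For fixed $A,t>0$ the map $s\mapsto\Theta(s;t)$ is real-analytic and even, so $\Theta(s;t)=\sum_{k\ge0}c_{2k}(t)\,s^{2k}$ with $c_{2k}(t)=\partial_s^{2k}\Theta(0;t)/(2k)!$, the differentiation under the series being justified by local uniform convergence. Differentiating $e^{-A^2t(2j+1+s)^2}$ in $s$ and setting $s=0$ produces the sums $\sum_j(2j+1)^{2n}e^{-A^2t(2j+1)^2}$ for $n=0,1,2,3$; since $(2j+1)^{2n}=4^n(j+\frac12)^{2n}$ and $A^2t(2j+1)^2=4A^2t(j+\frac12)^2$, the identity (\ref{dertheta2}) gives $\sum_j(2j+1)^{2n}e^{-A^2t(2j+1)^2}=(-1)^nA^{-2n}\,\theta_2^{(n)}(e^{-4A^2t})$. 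Consequently each $c_{2k}(t)$ is an explicit linear combination, with coefficients polynomial in $A^2t$, of $\theta_2,\theta_2^{(1)},\dots,\theta_2^{(k)}$ evaluated at $e^{-4A^2t}$; for instance $c_0=\theta_2$ and $c_2(t)=-A^2t\,\theta_2-2A^2t^2\theta_2^{(1)}$, with $c_4,c_6$ obtained from the same elementary computation at orders $4$ and $6$.

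Next I substitute the odd series $s=\tanh(\varepsilon/2)=\frac{\varepsilon}{2}-\frac{\varepsilon^3}{24}+\frac{\varepsilon^5}{240}-\cdots$ into $\sum_k c_{2k}(t)\,s^{2k}$ and collect powers of $\varepsilon$, obtaining $\Theta(s(\varepsilon);t)=\theta_2(e^{-4A^2t})+P_2(t)\varepsilon^2+P_4(t)\varepsilon^4+P_6(t)\varepsilon^6+O(\varepsilon^8)$, where each $P_{2k}(t)$ is a fixed rational combination of $c_0(t),\dots,c_{2k}(t)$. Only even powers of $\varepsilon$ occur, because $\Theta$ is even in $s$ and $s(\varepsilon)$ is odd — in agreement with the $\varepsilon\to-\varepsilon$ symmetry of the bipartite energy. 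Then $E_{2k}(A)=\frac12\int_0^\infty P_{2k}(t)\,d\mu_f(t)$; in particular $P_2(t)=\frac14c_2(t)=-\frac{A^2t}{4}\theta_2-\frac{A^2t^2}{2}\theta_2^{(1)}$ reproduces the announced formula for $E_2(A)$, and carrying the same bookkeeping to orders $\varepsilon^4$ and $\varepsilon^6$ yields those for $E_4(A)$ and $E_6(A)$.

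The only step that is not a routine calculation is justifying the term-by-term integration, i.e. that the $\varepsilon$-expansion may be interchanged with $\int_0^\infty\!\cdot\,d\mu_f(t)$. This amounts to a dominated-convergence estimate: for $1\le k\le8$ and $\varepsilon$ in a fixed neighbourhood of $0$ one needs $|\partial_\varepsilon^k\Theta(s(\varepsilon);t)|\le G_A(t)$ with $\int_0^\infty G_A(t)\,d\mu_f(t)<\infty$ (such $G_A$ then also controls the $O(\varepsilon^8)$ remainder). As $t\to\infty$ the sum defining $\Theta$ and its derivatives is dominated by the terms $j\in\{0,-1\}$, giving a bound $C_k\,t^{k}e^{-A^2t/8}\le C_k'e^{-A^2t/16}$, and $\int_1^\infty e^{-A^2t/16}\,d\mu_f(t)=f(A/4)<\infty$. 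As $t\to0^+$, Poisson summation (\ref{eq_thetaPoisson}) applied with a half-integer shift yields
\[
\Theta(s;t)=\sqrt{\frac{\pi}{4A^2t}}\Big[1+2\sum_{k\ge1}(-1)^k\cos(\pi k s)\,e^{-\pi^2k^2/(4A^2t)}\Big],
\]
so every $s$-derivative of order $\ge1$ obeys $|\partial_s^l\Theta(s;t)|\le C_l\,t^{-1/2}e^{-\pi^2/(4A^2t)}$ uniformly in $s\in\R$; by the Fa\`a di Bruno formula the same kind of bound holds for $\partial_\varepsilon^k\Theta(s(\varepsilon);t)$ when $k\ge1$, and this is $d\mu_f$-integrable near $0$ since $\mu_f\big((0,1]\big)<\infty$ and $t\mapsto t^{-1/2}e^{-\pi^2/(4A^2t)}$ is bounded on $(0,1]$. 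With these dominations one may differentiate under the integral sign up to order $6$, and the computation of the previous paragraphs delivers the stated $E_2$, $E_4$, $E_6$. The delicate point — and the place where the admissibility and decay hypotheses on $f$ are really used — is making the ``uniform in $\varepsilon$ and $s$'' estimates near $t=0$ precise.
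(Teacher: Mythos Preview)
Your proof is correct and follows essentially the same strategy as the paper: expand the $\varepsilon$-dependent integrand of (\ref{biprepr2}) in powers of $\varepsilon$ and read off the coefficients in terms of $\theta_2$ and its $t$-derivatives. The paper does this by brute-force Taylor expansion of $e^{-(a+2jA)^2t}+e^{-(b+2jA)^2t}$ directly in $\varepsilon$, quoting the resulting polynomial in $\varepsilon$, $A^2t$ and $(j+\tfrac12)^2$, and then summing over $j$ using (\ref{dertheta2}).

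Your route differs only in organisation, not in substance: the substitution $s=\tanh(\varepsilon/2)$, which linearises the periods to $a=A(1+s)$, $b=A(1-s)$, is a neat device that makes the evenness manifest and cleanly separates the two steps (expand the Gaussian sum in $s$, then compose with the odd series $s(\varepsilon)$). It yields exactly the same coefficients with less risk of algebraic error. You also supply dominated-convergence bounds (via Poisson summation near $t=0$ and the trivial nearest-term estimate near $t=\infty$) that justify the term-by-term integration; the paper omits this justification entirely. One small caveat: since $\mu_f$ is a signed Radon measure (e.g.\ for the Mie potential), your integrability estimates should be phrased against $|\mu_f|$ rather than $\mu_f$, and the line ``$\int_1^\infty e^{-A^2t/16}\,d\mu_f(t)=f(A/4)$'' is not literally correct (the lower limit should be $0$, and one really wants finiteness of $\int_1^\infty e^{-A^2t/16}\,d|\mu_f|(t)$), but this is a cosmetic fix.
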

\begin{proof}
While the first integral on the right-hand side of (\ref{biprepr2}) does not
depend on $\varepsilon$, the second one does.
Let us consider the small-$\varepsilon$ expansion of the function
\begin{eqnarray}
\hspace{-17pt}
e^{-(a+2jA)^2 t} + e^{-(b+2jA)^2 t} = e^{-4\left(j+\scriptstyle\frac{1}{2}\right)^2A^2 t} 
\Bigg\{ 2 + \left[ - \frac{A^2 t}{2} + 4 (A^2 t)^2
  \left( j + \frac{1}{2} \right)^2 \right] \varepsilon^2 \nonumber \\
\hspace{-17pt}
+ A^2 t \left[ \frac{1}{12} + \frac{1}{16} A^2 t
- A^2 t \left( \frac{2}{3} + A^2 t \right) \left( j + \frac{1}{2} \right)^2 
+ \frac{4}{3} (A^2 t)^3 \left( j + \frac{1}{2} \right)^4 \right]
\varepsilon^4 \nonumber \\ \hspace{-17pt}
+ A^2t \Biggl[ -\frac{17}{1440} - \frac{1}{48}A^2t - \frac{1}{192}(A^2t)^2
+ A^2 t\left( \frac{17}{180}+\frac{A^2t}{3}+\frac{(A^2t)^2}{8}\right)
\left( j+\frac{1}{2}\right)^2 \nonumber\\ \hspace{-17pt}
- (A^2t)^3 \left(\frac{4}{9} + \frac{A^2t}{3} \right)
\left( j+\frac{1}{2}\right)^4
+\frac{8(A^2t)^5}{45}\left( j+\frac{1}{2}\right)^6\Biggr]\varepsilon^6
+ O(\varepsilon^8) \Bigg\} .
\end{eqnarray}
Using the notation (\ref{dertheta2}) for the derivatives of the
$\theta_2$-function, this gives the expansion of the bipartite energy
in even powers of $\varepsilon$ as desired.
\end{proof}

The general analysis of the series expansions (\ref{Landau1}) together
with the minimization condition (\ref{Landaumin1}) or (\ref{min1})
in the context of critical phenomena in equilibrium statistical mechanics
is explained in many textbooks, see e.g. \cite{Landau80,Toledano87}.

By (\ref{Landau2}), it turns out that, for all $A>0$, as $\varepsilon\to 0$,
we have
$$
E^{\rm bip}(A,e^{\varepsilon})-E^{\rm eq}(A)=\varepsilon^2\left[E_2(A)
+E_4(A)\varepsilon^2 + O(\varepsilon^4) \right],
$$
which means that the sign of $E_2(A)$ determines the one of
$E^{\rm bip}(A,e^{\varepsilon})-E^{\rm eq}(A)$ for sufficiently small values
of $\varepsilon>0$, giving the optimality of the equidistant configuration
$\{jA\}_{j\in \Z}$ when $E^{\rm bip}(A,e^{\varepsilon})-E^{\rm eq}(A)>0$ and
the optimality of the (strictly) bipartite configuration when
$E^{\rm bip}(A,e^{\varepsilon})-E^{\rm eq}(A)<0$.

In the following, we recall the definitions of critical
(second-order transition) and tricritical points as well as sufficient
conditions for their existence, as given in \cite{Betermin24}.
The proofs we gave in the previous work are general enough
to show the same results in the present setting.

\begin{defi}[\textbf{Transition and tricritical points}]
Let $f\in \mathcal{F}$, then any $A_c$ such that
$$
E_2(A_c)=0
$$
with a change of sign for $E_2$ at $A=A_c$ is called a transition point.
Furthermore, if $f=f_\alpha$ depends on a real parameter $\alpha$,
we say that $A^t>0$ and $\alpha^t$ are coordinates of a tricritical point
if $A^t(\alpha^t)$ is a transition point satisfying
$$
E_2(A^t(\alpha^t))=E_4(A^t(\alpha^t))=0,
$$
with a change of sign for $E_4$ at $A^t(\alpha^t)$.
\end{defi}

\begin{prop}[\textbf{Minimal configurations and asymptotics near a transition point, \cite{Betermin24}}] \label{prop:exp}
Let $f\in \mathcal{F}$ and $A_c$ be a transition point such that:
\begin{enumerate}
\item[(1)] $E_2$ is strictly decreasing in the neighborhood of $A_c$;
\item[(2)] $E_4(A_c)>0$.
\end{enumerate}
Then there exists $A_0>0$ and $A_1>0$ such that
\begin{itemize}
\item[•] if $A_0<A<A_c$, then $\varepsilon=0$ is the unique minimizer of
$\varepsilon\mapsto E^{\rm bip}(A,e^{\varepsilon})$;
\item[•] if $A_c<A<A_1$, a minimizer $\varepsilon$ of
$\varepsilon\mapsto E^{\rm bip}(A,e^{\varepsilon})$ satisfies the following
asymptotics, as $A\to A_c$:
\begin{equation}\label{nontrivialeps}
\varepsilon=\sqrt{\frac{-\frac{d E_2}{d A}(A_c)}{2E_4(A_c)}}\sqrt{A-A_c}
+ o(\sqrt{A-A_c}).
\end{equation}
\end{itemize}
\end{prop}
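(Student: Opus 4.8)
The plan is to follow the strategy of \cite{Betermin24}: reduce the two-variable minimization to a one-variable problem via the $\varepsilon\to-\varepsilon$ symmetry, and then apply the implicit function theorem. Set $G(A,\varepsilon):=E^{\rm bip}(A,e^{\varepsilon})-E^{\rm eq}(A)$. Proposition~\ref{prop:expansion} gives $G(A,\varepsilon)=E_2(A)\varepsilon^2+E_4(A)\varepsilon^4+O(\varepsilon^6)$ as $\varepsilon\to 0$, and $G$ is even in $\varepsilon$. Arguing as in the proof of Proposition~\ref{prop:expansion} (differentiating under the $d\mu_f$-integral, which is licit because of the decay of $f$ at infinity and because $d\mu_f(t)=o(t^{-1/2})$ as $t\to 0^+$), the map $(A,\varepsilon)\mapsto E^{\rm bip}(A,e^{\varepsilon})$ is jointly smooth, hence so is $G$, and $E_2,E_4$ are smooth in $A$ with a Taylor remainder that is locally uniform in $A$. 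Since an even smooth function of $\varepsilon$ is a smooth function of $\varepsilon^2$, writing $s:=\varepsilon^2$ we obtain $G(A,\varepsilon)=H(A,s)$ with $H$ smooth near $(A_c,0)$ and $H(A,s)=E_2(A)\,s+E_4(A)\,s^2+O(s^3)$. Minimizing $\varepsilon\mapsto G(A,\varepsilon)$ over a fixed interval $|\varepsilon|<\eta$ is then the same as minimizing $s\mapsto H(A,s)$ over $s\in[0,\eta^2)$, the minimizers in $\varepsilon$ being $\pm\sqrt{s^\ast}$ for each minimizer $s^\ast$ in $s$.

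Next I would locate the interior critical points. We have $\partial_s H(A,s)=E_2(A)+2E_4(A)\,s+O(s^2)$, so at $(A_c,0)$ one has $\partial_s H(A_c,0)=E_2(A_c)=0$ while $\partial_s^2 H(A_c,0)=2E_4(A_c)>0$. The implicit function theorem then produces $\eta>0$, an interval $(A_0,A_1)\ni A_c$ and a unique smooth function $A\mapsto S(A)$ on $(A_0,A_1)$ with $S(A_c)=0$ and $\partial_s H(A,S(A))=0$; shrinking $(A_0,A_1)$ and $\eta$ we may also assume $\partial_s^2 H>0$ on $(A_0,A_1)\times[0,\eta^2)$, so that $s\mapsto H(A,s)$ is strictly convex there, and $S(A)\in(0,\eta^2)$ whenever $S(A)>0$. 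From $E_2(A)+2E_4(A)S(A)+O(S(A)^2)=0$, together with $S(A)\to 0$, the continuity $E_4(A)\to E_4(A_c)>0$, and the first-order expansion $E_2(A)=\frac{dE_2}{dA}(A_c)(A-A_c)+o(A-A_c)$ (recall $E_2(A_c)=0$), one gets
\[
S(A)=\frac{-E_2(A)}{2E_4(A_c)}\bigl(1+o(1)\bigr)=\frac{-\frac{dE_2}{dA}(A_c)}{2E_4(A_c)}\,(A-A_c)+o(A-A_c).
\]
Because $E_2$ is strictly decreasing near $A_c$ with a sign change there, $E_2(A)>0$ for $A\in(A_0,A_c)$ and $E_2(A)<0$ for $A\in(A_c,A_1)$; consequently $S(A)<0$ on $(A_0,A_c)$ (no interior critical point of $H(A,\cdot)$) and $S(A)>0$ on $(A_c,A_1)$, after possibly shrinking $A_1$.

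Finally I would read off the two regimes from strict convexity. If $A_0<A<A_c$, then $\partial_s H(A,0)=E_2(A)>0$, and convexity forces $\partial_s H(A,\cdot)>0$ on $[0,\eta^2)$, so $H(A,\cdot)$ is strictly increasing with unique minimizer $s=0$; equivalently $\varepsilon=0$ is the unique minimizer of $\varepsilon\mapsto E^{\rm bip}(A,e^{\varepsilon})$ on $(-\eta,\eta)$. If $A_c<A<A_1$, then $\partial_s H(A,0)=E_2(A)<0$, so by strict convexity $\partial_s H(A,\cdot)$ has a unique zero on $[0,\eta^2)$, namely $s=S(A)$, at which $H(A,\cdot)$ attains its minimum; hence the minimizers of $\varepsilon\mapsto E^{\rm bip}(A,e^{\varepsilon})$ on $(-\eta,\eta)$ are $\varepsilon=\pm\sqrt{S(A)}$, and inserting the asymptotics of $S$ (and using $\sqrt{x+o(x)}=\sqrt{x}+o(\sqrt{x})$ for $x\to 0^+$) yields
\[
\varepsilon=\sqrt{S(A)}=\sqrt{\frac{-\frac{dE_2}{dA}(A_c)}{2E_4(A_c)}}\,\sqrt{A-A_c}+o\!\left(\sqrt{A-A_c}\right),
\]
which is (\ref{nontrivialeps}). (Generically $\frac{dE_2}{dA}(A_c)<0$ and this is the true leading order; the degenerate case $\frac{dE_2}{dA}(A_c)=0$ is still covered, both sides then being $o(\sqrt{A-A_c})$.)

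The main obstacle is not the implicit-function-theorem bookkeeping but making the reduction of the first paragraph rigorous \emph{on a neighbourhood of $A_c$ that does not depend on $\varepsilon$}: one must verify that the Taylor expansion of $E^{\rm bip}(A,e^{\varepsilon})$ from Proposition~\ref{prop:expansion} holds with a remainder uniform in $A$ near $A_c$, by dominating the relevant $\theta_2$-derivatives under the $d\mu_f$-integral using the decay of $f$ at infinity and the bound $d\mu_f(t)=o(t^{-1/2})$ near the origin, and that $H(A,s)=G(A,\sqrt s)$ is genuinely $C^2$ (indeed $C^\infty$) in $(A,s)$ at $s=0$. Note also that ``minimizer'' is to be read within the $A$-uniform neighbourhood $(-\eta,\eta)$ of $\varepsilon=0$ just constructed; to upgrade the conclusions to global minimizers of $\varepsilon\mapsto E^{\rm bip}(A,e^{\varepsilon})$ one would further argue that $E^{\rm bip}(A,e^{\varepsilon})$ stays bounded away from $E^{\rm eq}(A)$ — or diverges — as $|\varepsilon|\to\infty$ (particle coalescence), which is precisely the supplementary, model-dependent input used later in the paper.
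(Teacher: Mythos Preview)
The paper does not prove this proposition itself; it merely quotes the result from \cite{Betermin24} and remarks that ``the proofs we gave in the previous work are general enough to show the same results in the present setting.'' Your argument --- reducing to the even variable $s=\varepsilon^2$, applying the implicit function theorem to $\partial_s H=0$ at $(A_c,0)$ where $\partial_s^2 H(A_c,0)=2E_4(A_c)>0$, and then reading off both regimes from the strict convexity of $H(A,\cdot)$ --- is the standard Landau-type computation one expects in that reference, and it is correct as written. Your closing caveats are also appropriate: the conclusion is, strictly speaking, about minimizers on a fixed $A$-independent neighbourhood $(-\eta,\eta)$ of $\varepsilon=0$, and the uniformity of the Taylor remainder in $A$ (needed for the joint smoothness of $H$) is the only point requiring genuine care.
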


\begin{remark}
In the theory of critical phenomena \cite{Baxter07,Samaj13}, the
order parameter is a singular function of the deviation from the critical
point, in our case $\varepsilon$ is of order $(A-A_c)^{\beta}$ with $\beta=\frac{1}{2}$
being the mean-field critical exponent.
In the ordered region, the difference between the equidistant and
bipartite energies is proportional to
$(A-A_c)^{2-\alpha}$ with the mean-field critical exponent $\alpha=0$.
As expected for a second-order phase transition, we note that the first (resp. the second) derivation of the energy with respect to $A$ is continuous (resp. discontinuous) at $A=A_c$.
\end{remark}

Let us assume that the interaction potential $f=f_{\alpha}$ depends
on a parameter $\alpha\in\mathbb{R}$.
Then, in an interval of $\alpha$-values, say for $\alpha>\alpha^t$
with $\alpha^t$ being an edge value of $\alpha$,
to each $\alpha$ there exists a critical point $A_c(\alpha)$, given by
the nullity of the coefficient $E_2\left( A_c(\alpha)\right)=0$ in the
series expansion (\ref{Landau2}).
In analogy with statistical mechanics \cite{Bausch72,Griffiths70,Landau80},
only one continuous curve of second-order phase transitions is expected.
This curve terminates at the tricritical point characterized by
the edge value of $\alpha=\alpha^t$ and the critical value of
$A=A^t(\alpha^t)$.
According to the Landau's theory of phase transitions \cite{Landau37,Landau80},
the tricritical point is determined by the nullity of the first
two coefficients of the expansion (\ref{Landau2}).
More precisely, we have the following result.

\begin{prop}[\textbf{Asymptotics near a tricritical point, \cite{Betermin24}}]
Let $f_\alpha\in \mathcal{F}$ depending on a real parameter $\alpha$ and
$(\alpha^t,A^t)$ be the coordinates of a tricritical point such that
\begin{enumerate}
\item[(1)] $E_2$ is strictly decreasing in the neighborhood of $A^t$;
\item[(2)] $E_6(A^t)=\frac{d^6}{d\varepsilon^6}
[E^{\rm bip}(A^t,e^{\varepsilon})]_{|\varepsilon=0}>0$.
\end{enumerate}
Then there exists $A_2>0$ such that for $A^t<A<A_2$, a minimizer of
$\varepsilon\mapsto E^{\rm bip}(A,e^{\varepsilon})$ satisfies the following
asymptotics as $A\to A^t$:
$$
\varepsilon=\sqrt[4]{\frac{-\frac{d E_2}{d A}(A^t)}{3E_6(A^t)}}
(A-A^t)^{\frac{1}{4}}+o((A-A^t)^{\frac{1}{4}}).
$$
\end{prop}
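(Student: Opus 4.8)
The plan is to mimic the structure of the proof of Proposition~\ref{prop:exp} (the second-order case), but now expanding the bipartite energy to sixth order in $\varepsilon$ rather than fourth, and exploiting the two vanishing conditions $E_2(A^t)=E_4(A^t)=0$ characterizing a tricritical point. First I would fix $\alpha=\alpha^t$ and write $A$ close to $A^t$ with $A>A^t$, and set $h(A,\varepsilon):=E^{\rm bip}(A,e^{\varepsilon})-E^{\rm eq}(A)$. By the expansion (\ref{Landau2}) of Proposition~\ref{prop:expansion}, and using $E_4(A^t)=0$, $E_2(A^t)=0$, we have for $A$ near $A^t$ and $\varepsilon$ near $0$
\begin{equation*}
h(A,\varepsilon)=\varepsilon^2\Big[E_2(A)+E_4(A)\varepsilon^2+E_6(A)\varepsilon^4+O(\varepsilon^6)\Big],
\end{equation*}
where, by Taylor expansion in $A$, $E_2(A)=\frac{dE_2}{dA}(A^t)(A-A^t)+o(A-A^t)$, $E_4(A)=\frac{dE_4}{dA}(A^t)(A-A^t)+o(A-A^t)$, and $E_6(A)=E_6(A^t)+o(1)$. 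The idea is that in the $\varepsilon$-scaling regime $\varepsilon\sim(A-A^t)^{1/4}$, the three terms $E_2(A)$, $E_4(A)\varepsilon^2$, $E_6(A)\varepsilon^4$ are respectively of orders $(A-A^t)$, $(A-A^t)^{3/2}$, $(A-A^t)$, so the $E_4$ term is subdominant and the balance is between the $E_2$ and $E_6$ terms, exactly as in Landau's tricritical mean-field analysis.

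The key steps, in order, are as follows. (1) From hypothesis (1), $\frac{dE_2}{dA}(A^t)<0$, so $E_2(A)<0$ for $A\in(A^t,A_2)$ with $A_2$ chosen small enough; from hypothesis (2), $E_6(A^t)>0$, so $E_6(A)>0$ on the same interval after shrinking $A_2$. (2) Show that a global minimizer $\varepsilon(A)$ of $\varepsilon\mapsto E^{\rm bip}(A,e^{\varepsilon})$ on a fixed small neighborhood of $0$ is nontrivial and tends to $0$ as $A\to A^t$: since $E_2(A)<0$, plugging a small fixed test value of $\varepsilon$ makes $h<0$, so the minimizer is not $\varepsilon=0$; compactness on the closed neighborhood gives existence, and a uniform-in-$A$ bound $h(A,\varepsilon)\ge c\,\varepsilon^2(\varepsilon^4 - C(A-A^t))$ forces $\varepsilon(A)^4=O(A-A^t)$, hence $\varepsilon(A)\to 0$. (3) Insert the rescaling $\varepsilon=(A-A^t)^{1/4}u$ and, using the stationarity equation $\partial_\varepsilon E^{\rm bip}(A,e^\varepsilon)=0$, i.e.
\begin{equation*}
2E_2(A)\varepsilon+4E_4(A)\varepsilon^3+6E_6(A)\varepsilon^5+O(\varepsilon^7)=0,
\end{equation*}
divide by $\varepsilon$ and by $(A-A^t)$ to obtain $2\frac{dE_2}{dA}(A^t)+6E_6(A^t)u^4+o(1)=0$ as $A\to A^t$, by the order bookkeeping above (the $E_4$ term carries an extra factor $(A-A^t)^{1/2}\to0$, and the $o(\cdot)$ remainders are controlled by the already-known bound $\varepsilon(A)=O((A-A^t)^{1/4})$). (4) Solve for $u$: $u^4\to -\frac{dE_2}{dA}(A^t)/(3E_6(A^t))$, which is positive by the sign conditions, and taking the positive real fourth root (the minimizer being positive by the $\varepsilon\to-\varepsilon$ symmetry) gives $\varepsilon=\sqrt[4]{-\frac{dE_2}{dA}(A^t)/(3E_6(A^t))}\,(A-A^t)^{1/4}+o((A-A^t)^{1/4})$, as claimed.

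The main obstacle is step~(3)–(4): controlling the error terms uniformly. Unlike the formal Landau computation, one must justify that $\partial_\varepsilon E^{\rm bip}(A,e^\varepsilon)$ admits the stated Taylor expansion in $\varepsilon$ with remainders that are $O(\varepsilon^7)$ \emph{uniformly for $A$ in a neighborhood of $A^t$} — this requires differentiating the integral representation (\ref{biprepr2}) under the integral sign and bounding the resulting $\mu_f$-integrals, which is where the decay hypothesis $f\in\mathcal F$ (equivalently $d\mu_f(t)=o(t^{-1/2})$ near $0$ together with integrability at infinity) enters; the coefficients $E_2,E_4,E_6$ and the next coefficient $E_8$ must all be finite and $C^1$ in $A$ near $A^t$. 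Once that uniformity is in hand, the rescaling argument is a routine application of the implicit function theorem (or a direct bisection/monotonicity argument on the rescaled stationarity equation), and since this is precisely the content of the cited companion result \cite{Betermin24}, I would invoke that proof verbatim, checking only that the hypotheses (1)–(2) here match the ones used there and that the present functions $E_2,E_4,E_6$ arising from Proposition~\ref{prop:expansion} have the required regularity.
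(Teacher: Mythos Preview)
Your proposal is correct and aligns with the paper's treatment: the paper does not give a self-contained proof of this proposition but explicitly defers to the companion work \cite{Betermin24}, stating that ``the proofs we gave in the previous work are general enough to show the same results in the present setting.'' Your sketch of the Landau-type rescaling argument (balancing $E_2(A)\sim E_2'(A^t)(A-A^t)$ against $E_6(A^t)\varepsilon^4$ in the stationarity equation, with the $E_4$ term subdominant because $E_4(A^t)=0$) is the standard mean-field computation underlying that cited result, and your concluding remark that one should invoke \cite{Betermin24} after checking regularity of the coefficients is exactly what the paper does.
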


\begin{remark}[\textbf{On first-order phase transitions}]
The phase transitions are of first order when $\alpha<\alpha^t$, i.e.
the equidistant value of the parameter $\varepsilon=0$ skips
discontinuously to some $\varepsilon>0$ at the transition point.
There is no explicit formula for a first-order transition and the curve of
first-order phase transitions can be obtained only numerically by comparing
the equidistant energy with the bipartite energy in a larger interval
of the $\varepsilon$ values.
\end{remark}

\renewcommand{\theequation}{3.\arabic{equation}}
\setcounter{equation}{0}

\section{One-dimensional Lennard-Jones model} \label{Sec3}
This section is devoted to the application of the general theory of phase
transitions to the 1D Lennard-Jones model of particles
interacting via the Mie potential $f_{nm}$ given by (\ref{Mie}).

\subsection{Interaction energies and series expansions}
Using the tools from section \ref{Sec2}, we get the following result
concerning the $f_{mn}$-energy.
\begin{prop}[\textbf{Energy of a bipartite chain for the Mie potential}] \label{prop:enerbipLJ}
Let $f=f_{nm}$ be defined by {\rm (\ref{Mie})}, where $n>m$, then,
for all $A>0$, the energy per particle for the bipartite configuration is
\begin{equation} \label{energyy}
E_{nm}^{\rm bip}(A,\Delta) = \frac{1}{n-m}
\left[ m\ U(n,A,\Delta)-n\ U(m,A,\Delta) \right],
\end{equation}
where the energy per particle {\rm (\ref{biprepr1})} for
the Riesz interaction potential $f:r\mapsto r^{-s}$ reads as
\begin{equation} \label{Riesz}
U(s,A,\Delta) = \frac{1}{(2A)^s} \zeta(s) + \frac{1}{2 (2A)^s}
\left[ \zeta\left( s,\frac{\Delta}{1+\Delta}\right)
+\zeta\left( s,\frac{1}{1+\Delta}\right) \right]
\end{equation}
with 
\begin{equation} \label{RiemannZeta}
\zeta(s) = \sum_{j=1}^{\infty} \frac{1}{j^s} , \qquad \Re(s)>1
\end{equation}
denoting the standard Riemann zeta function and
\begin{equation} \label{HurwitzZeta}
\zeta(s,a) = \sum_{j=0}^{\infty} \frac{1}{(j+a)^s} , \qquad \Re(s)>1,
\qquad a>0,  
\end{equation}
is the Hurwitz zeta function. \\
In particular, the energy per particle for the equidistant configuration
$\{j A\}_{j\in \Z}$ is given by
\begin{equation} \label{energd1}
E_{nm}^{\rm eq}(A) = \frac{1}{n-m} \left[ \frac{m\, \zeta(n)}{A^n}
-\frac{n\ \zeta(m)}{A^m} \right].
\end{equation}
Furthermore $A\mapsto E_{nm}^{\rm eq}(A)$ is decreasing on $(0,A_{\min}^{nm})$
and increasing on $(A_{\min}^{nm},+\infty)$ where
\begin{equation} \label{A}
A^{nm}_{\min} := \left[ \frac{\zeta(n)}{\zeta(m)} \right]^{\frac{1}{n-m}} ,
\qquad E_{nm}^{\rm eq}(A^{nm}_{\min}) = - \zeta(n)^{\frac{m}{m-n}} \zeta(m)^{\frac{n}{n-m}} .
\end{equation}
Moreover, we have
$$
\lim_{n\to +\infty} A^{nm}_{\min}=1 \quad \textnormal{and} \quad 
A^m_{\min} :=\lim_{n\to m^+}  A^{nm}_{\min}  = \exp\left[ \frac{\zeta'(m)}{\zeta(m)} \right],  
$$
and
$$
\lim_{m\to +\infty} A^m_{\min}=1.
$$
\end{prop}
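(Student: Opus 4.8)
The plan is to establish formula (\ref{energyy}) first, then extract (\ref{energd1}) as the case $\Delta=1$, and finally analyze the one-variable function $A\mapsto E_{nm}^{\rm eq}(A)$ by elementary calculus. For the bipartite energy, I would start from the general representation (\ref{biprepr1}), which for the Riesz potential $f:r\mapsto r^{-s}$ specializes to
$$
U(s,A,\Delta)=\frac{1}{2}\xsum_{j\in\Z}\frac{1}{|2jA|^s}
+\frac{1}{4}\sum_{j\in\Z}\left[\frac{1}{|a+2jA|^s}+\frac{1}{|b+2jA|^s}\right].
$$
The first sum is $\frac12\cdot 2\sum_{j\ge1}(2jA)^{-s}=(2A)^{-s}\zeta(s)$. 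For the second sum, since $a=2A\Delta/(1+\Delta)\in(0,2A)$ and $b=2A-a$, splitting $\sum_{j\in\Z}$ into $j\ge0$ and $j\le-1$ and recognizing $|a+2jA|=2A(j+\frac{a}{2A})$ for $j\ge0$ and $2A(-j-\frac{a}{2A})=2A(|j|-\frac{a}{2A})$ for $j\le-1$, one collects the two families $\{j+\frac{a}{2A}:j\ge0\}$ and $\{j+\frac{b}{2A}:j\ge0\}$, giving exactly $(2A)^{-s}[\zeta(s,\tfrac{a}{2A})+\zeta(s,\tfrac{b}{2A})]$ with $\frac{a}{2A}=\frac{\Delta}{1+\Delta}$ and $\frac{b}{2A}=\frac{1}{1+\Delta}$; the symmetrization already present in (\ref{biprepr1}) means the $a$- and $b$-terms each appear once, and one factor $\frac12$ from (\ref{biprepr1}) together with the pairing yields the prefactor $\frac{1}{2(2A)^s}$. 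Plugging $U(n,\cdot)$ and $U(m,\cdot)$ into the Mie combination (\ref{Mie}) gives (\ref{energyy}).

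Setting $\Delta=1$ kills the Hurwitz contribution cleanly: $\zeta(s,\tfrac12)=(2^s-1)\zeta(s)$, so $U(s,A,1)=(2A)^{-s}\zeta(s)+\frac{1}{2}(2A)^{-s}\cdot2(2^s-1)\zeta(s)=(2A)^{-s}\zeta(s)\,2^s=A^{-s}\zeta(s)$, and (\ref{energd1}) follows from (\ref{Mie}). Alternatively one can cite (\ref{eqenergy}) with the Riesz measure directly. The monotonicity claim is then a one-variable exercise: writing $E_{nm}^{\rm eq}(A)=\frac{1}{n-m}(m\zeta(n)A^{-n}-n\zeta(m)A^{-m})$, differentiate to get $\frac{d}{dA}E_{nm}^{\rm eq}(A)=\frac{mn}{n-m}(-\zeta(n)A^{-n-1}+\zeta(m)A^{-m-1})=\frac{mn}{n-m}A^{-n-1}(\zeta(m)A^{n-m}-\zeta(n))$; since $n>m$ and $mn/(n-m)>0$, this is negative precisely when $A^{n-m}<\zeta(n)/\zeta(m)$, i.e. $A<A_{\min}^{nm}:=(\zeta(n)/\zeta(m))^{1/(n-m)}$, and positive for $A>A_{\min}^{nm}$, giving the stated decreasing/increasing behavior and the unique minimum. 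Substituting $A_{\min}^{nm}$ back and simplifying the exponents yields $E_{nm}^{\rm eq}(A_{\min}^{nm})=-\zeta(n)^{m/(m-n)}\zeta(m)^{n/(n-m)}$ after routine algebra.

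For the three limits: as $n\to+\infty$, $\zeta(n)\to1$ so $A_{\min}^{nm}=(\zeta(n)/\zeta(m))^{1/(n-m)}=\exp\!\big(\tfrac{\ln\zeta(n)-\ln\zeta(m)}{n-m}\big)\to\exp(0)=1$ since the numerator is bounded. For $n\to m^+$, write $A_{\min}^{nm}=\exp\!\big(\tfrac{\ln\zeta(n)-\ln\zeta(m)}{n-m}\big)$ and the difference quotient converges to $\frac{d}{ds}\ln\zeta(s)\big|_{s=m}=\zeta'(m)/\zeta(m)$, giving $A_{\min}^m=\exp(\zeta'(m)/\zeta(m))$; this uses $m>1$ so that $\zeta$ is analytic and nonzero there. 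Finally $\lim_{m\to+\infty}A_{\min}^m=\exp(\lim_{m\to\infty}\zeta'(m)/\zeta(m))=\exp(0)=1$, since $\zeta(m)\to1$ and $\zeta'(m)=-\sum_{k\ge2}(\ln k)k^{-m}\to0$ as $m\to\infty$. The only mildly delicate point is justifying the interchange of summation and the Laplace-transform integral in the derivation of $U$, but this is immediate from absolute convergence (the paper's admissibility hypothesis with $s>1$), so there is no real obstacle; the argument is entirely elementary once (\ref{energyy}) is in hand.
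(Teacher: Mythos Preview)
Your proof is correct and follows essentially the same approach as the paper: deriving $U(s,A,\Delta)$ from the symmetrized representation (\ref{biprepr1}) via the Hurwitz zeta function, specializing to $\Delta=1$, and handling the limits of $A_{\min}^{nm}$ by writing it as an exponential of a difference quotient of $\ln\zeta$. The paper's own proof actually skips everything except the three limits (declaring the rest immediate), so you have supplied the details it leaves implicit, and your treatment of the limits matches the paper's almost verbatim.
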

\begin{proof}
The only points which have to be proven are the two last ones.
We have indeed
$$
\lim_{n\to +\infty} A^{nm}_{\min}=\lim_{n\to +\infty}
\left[ \frac{\zeta(n)}{\zeta(m)} \right]^{\frac{1}{n-m}}=1,
$$
since $\displaystyle \lim_{n\to +\infty} \zeta(n)=1$, and
$$
\lim_{n\to m^+} A^{nm}_{\min} = \lim_{\varepsilon\to 0^+}
\left[ \frac{\zeta(m+\varepsilon)}{\zeta(m)} \right]^{1/\varepsilon}
=\exp\left[ \frac{\zeta'(m)}{\zeta(m)} \right].
$$
Furthermore, we get
$$
\lim_{m\to +\infty} \frac{\zeta'(m)}{\zeta(m)}= \lim_{m\to +\infty}
\frac{-\displaystyle\sum_{k\geq 2} \ln(k) k^{-m}}{
1+\displaystyle\sum_{k\geq 2 }k^{-m}}=0,
$$
which gives the last asymptotics.
\end{proof}
\begin{remark}[\textbf{Analytic continuation of the energy}]\label{rmk_analytic}
For more information about the Riemann and Hurwitz zeta functions,
see for instance \cite{Gradshteyn}.
These special functions can be analytically continued to the whole complex
$s$-plane, except for the simple pole at $s=1$.
This permits one to extend the definition of the above energies per particle
to all values of the LJ parameters $n>m$ such that $n,m\ne 1$.
Physically acceptable models must respect the inequality $n>m$.
For the case $0<m<1<n$, the possibility of crystallization for a large but
finite number of particles was analyzed recently in \cite{Crismale23}.
Even more exotic case of LJ numbers $-1<m<n$ was studied in \cite{Luo21}.
\end{remark}
\begin{remark}[\textbf{Numerical study of the behavior of $A^m_{\min}$}]
The plot of $A^m_{\min}$ versus $m$ is pictured in figure \ref{f2}.
It is seen that $A^m_{\min}$ goes to 0 in the limit $m\to 1^+$ which
follows from the asymptotic relation
\begin{equation}
\frac{\zeta'(m)}{\zeta(m)} = - \frac{1}{m-1} + O(1) . 
\end{equation}
$A^m_{\min}$ goes to 1 when $m\to\infty$ which is rigorously proven in
the previous result.
\end{remark}

\begin{figure}[tbp]
\begin{center}
\includegraphics[clip,width=0.5\textwidth]{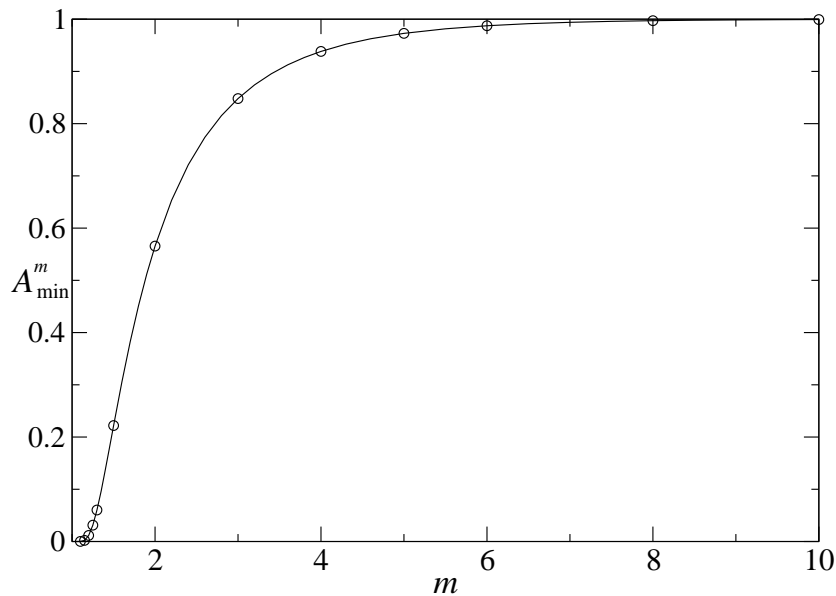}
\caption{The 1D $(n,m)$ LJ model in the limit $n\to m^+$.
The plot of the lattice spacing $A^m_{\min}$, at which the equidistant energy
per particle has the global minimum, versus $m$.
$A^m_{\min}$ goes to 0 in the limit $m\to 1^+$ and it is numerically confirmed
that the limit is 1 when $m\to\infty$.}
\label{f2}
\end{center}
\end{figure}

\begin{example}[\textbf{The standard Lennard-Jones potential}]
For the standard $(12,6)$ Lennard-Jones model, one gets
\begin{eqnarray} 
A^{12,6}_{\min} = \frac{\pi}{5^{\frac{1}{3}}\sqrt{3} } \left( \frac{691}{1001} \right)^{\frac{1}{6}}
\approx 0.997179263885 , \nonumber \\
E_{12,6}^{\rm eq}(A^{12,6}_{\min}) = - \frac{715}{691} \approx - 1.034732272069 . 
\end{eqnarray}
These results agree with those obtained numerically
 in \cite{Stillinger95}; recall that Stillinger used the potential
$f(r)=4(r^{-12}-r^{-6})$ implying that $r_{\min}=2^{1/6}$ and $f(r_{\min})=-1$,
so our length has to be scaled as follows $2^{1/6}A^{12,6}_{\min}$
to reproduce his value approximately equal to $1.1193$.
The value of the spacing $A^{12,6}_{\min}$ is very close to 1
which is the distance at which the LJ potential exhibits its minimum $-1$.
\end{example}

Applying directly the expansion constructed in Proposition
\ref{prop:expansion}, we have the following finding.
\begin{corollary}[\textbf{Asymptotic expansion of the energy and the transition point}]
Let $A>0$, $m>n>1$, then, as $\varepsilon\to 0$, we have
\begin{equation} \label{trans}
E_{nm}^{\rm bip}\left( A,e^{\varepsilon}\right) = E_{nm}^{\rm eq}(A)
+ E_2^{nm}(A) \varepsilon^2 + E_4^{nm}(A) \varepsilon^4 + O(\varepsilon^6), 
\end{equation}
where the absolute term is given by (\ref{energd1}) and the coefficients
read as
\begin{eqnarray}
E_2^{nm}(A) & = & \frac{n m}{n-m} \frac{1}{2^5} \left[
\frac{2^{2+n}-1}{(2A)^n} (1+n) \zeta(n+2) \right. \nonumber \\
& & \left. - \frac{2^{2+m}-1}{(2A)^m} (1+m) \zeta(m+2) \right] , \label{rov1}
\end{eqnarray}
\begin{eqnarray}
E_4^{nm}(A) & = & \frac{n m}{3(n-m)} \frac{1}{2^{11}} \left[
\frac{2^{4+n}-1}{(2A)^n} (1+n) (2+n) (3+n) \zeta(n+4) \right. \nonumber \\
& & \left.  - \frac{2^{4+m}-1}{(2A)^m} (1+m) (2+m) (3+m) \zeta(m+4) \right]
\nonumber \\ & & - \frac{1}{6} E_2^{nm}(A). \label{rov2}
\end{eqnarray}  
Furthermore, the unique transition point $A_c^{nm}$ is given by
\begin{equation} \label{astar}
A_c^{nm} = \frac{1}{2} 
\left[ \frac{(2^{2+n}-1)(1+n)\zeta(n+2)}{(2^{2+m}-1)(1+m)\zeta(m+2)}
\right]^{\frac{1}{n-m}} .  
\end{equation} 
Moreover, in the limit $n\to m^+$, this relation yields
\begin{equation}
A_c^{m^+,m}:=\lim_{n\to m^+} A_c^{nm}=\exp\left( \frac{\ln 2}{2^{2+m}-1}+\frac{1}{1+m}
+\frac{\zeta'(m+2)}{\zeta(m+2)} \right),
\end{equation}
and in the hard-core limit $n\to\infty$, the critical inverse particle density
tends to the $m$-independent value
\begin{equation}
\lim_{n\to\infty} A_c^{nm} = 1.
\end{equation}  
\end{corollary}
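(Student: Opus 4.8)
The plan is to reduce the whole statement to the Riesz potentials and then recombine. Since the $f$-energy is linear in $f$ and $f_{nm}(r)=\frac{1}{n-m}\bigl(m\,r^{-n}-n\,r^{-m}\bigr)$, Proposition~\ref{prop:enerbipLJ} gives $E_{nm}^{\rm bip}(A,\Delta)=\frac{1}{n-m}\bigl[m\,U(n,A,\Delta)-n\,U(m,A,\Delta)\bigr]$ with $U$ the Riesz energy of (\ref{Riesz}); hence it suffices to expand $\varepsilon\mapsto U(s,A,e^{\varepsilon})$ to order $\varepsilon^{4}$ for a fixed real $s>1$ and then apply the same linear combination to the coefficients. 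This immediately produces (\ref{trans}): the $\varepsilon^{0}$-term is $U(s,A,1)=\zeta(s)/A^{s}$ (using the doubling identity $\zeta(s,\frac12)=(2^{s}-1)\zeta(s)$), so that the constant term is $E_{nm}^{\rm eq}(A)$ as in (\ref{energd1}), and the remaining two coefficients give $E_{2}^{nm}$ and $E_{4}^{nm}$.

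To carry out the expansion of $U(s,A,e^{\varepsilon})$ from (\ref{Riesz}), I would first rewrite the two Hurwitz arguments with $\Delta=e^{\varepsilon}$ as $\frac{\Delta}{1+\Delta}=\frac12+u$ and $\frac{1}{1+\Delta}=\frac12-u$, where $u=u(\varepsilon):=\frac12\tanh(\varepsilon/2)=\frac{\varepsilon}{4}-\frac{\varepsilon^{3}}{48}+O(\varepsilon^{5})$; the $a\leftrightarrow b$ symmetry is exactly $u\to-u$, so only even powers survive. Since $a\mapsto\zeta(s,a)$ is analytic for $a>0$, a Taylor expansion about $a=\frac12$ gives $\zeta(s,\frac12+u)+\zeta(s,\frac12-u)=2\zeta(s,\frac12)+\partial_{a}^{2}\zeta(s,\frac12)\,u^{2}+\frac{1}{12}\partial_{a}^{4}\zeta(s,\frac12)\,u^{4}+O(u^{6})$; then $\partial_{a}^{k}\zeta(s,a)=(-1)^{k}\frac{\Gamma(s+k)}{\Gamma(s)}\zeta(s+k,a)$ and $\zeta(s+k,\frac12)=(2^{s+k}-1)\zeta(s+k)$, and after substituting $u^{2}=\frac{\varepsilon^{2}}{16}-\frac{\varepsilon^{4}}{96}+O(\varepsilon^{6})$ and $u^{4}=\frac{\varepsilon^{4}}{256}+O(\varepsilon^{6})$ one reads off the $\varepsilon^{2}$- and $\varepsilon^{4}$-coefficients of $U(s,A,e^{\varepsilon})$. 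Forming $\frac{1}{n-m}\bigl(m\times(\text{coefficient at }s=n)-n\times(\text{coefficient at }s=m)\bigr)$ and using $\frac{\Gamma(s+2)}{\Gamma(s)}=s(s+1)$, $\frac{\Gamma(s+4)}{\Gamma(s)}=s(s+1)(s+2)(s+3)$ yields (\ref{rov1}); since the $\varepsilon^{4}$-part of the $u^{2}$-term equals $-\frac16$ times the $\varepsilon^{2}$-coefficient for every $s$, the combination contributes the $-\frac16 E_{2}^{nm}(A)$ tail of (\ref{rov2}).

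For the transition point I would write $E_{2}^{nm}(A)=\frac{nm}{2^{5}(n-m)}(2A)^{-n}\bigl[P-Q(2A)^{n-m}\bigr]$ with $P:=(2^{2+n}-1)(1+n)\zeta(n+2)>0$ and $Q:=(2^{2+m}-1)(1+m)\zeta(m+2)>0$ (positivity since $n,m>1$). As $n>m$, the prefactor is positive, $(2A)^{-n}>0$, and $A\mapsto(2A)^{n-m}$ is a strictly increasing bijection of $(0,\infty)$ onto itself; hence $E_{2}^{nm}$ is positive on $(0,A_{c}^{nm})$ and negative on $(A_{c}^{nm},\infty)$, with the single zero $A_{c}^{nm}=\frac12(P/Q)^{1/(n-m)}$, which is precisely (\ref{astar}), and the sign change there is genuine, so $A_{c}^{nm}$ is the unique transition point.

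For the two limits, absorb the prefactor $\frac12$ into the bracket: $A_{c}^{nm}=h(n)^{1/(n-m)}$ with $h(n):=\frac{(2^{2+m}-2^{m-n})(1+n)\zeta(n+2)}{(2^{2+m}-1)(1+m)\zeta(m+2)}$ and $h(m)=1$, so $h(n)^{1/(n-m)}=\exp\bigl(\frac{\log h(n)-\log h(m)}{n-m}\bigr)\to\exp\bigl((\log h)'(m)\bigr)$ as $n\to m^{+}$, and differentiating gives $(\log h)'(m)=\frac{\ln 2}{2^{2+m}-1}+\frac{1}{1+m}+\frac{\zeta'(m+2)}{\zeta(m+2)}$; for $n\to\infty$ one has $\zeta(n+2)\to1$ and $\frac{1}{n-m}\log\bigl[(2^{2+n}-1)(1+n)\zeta(n+2)\bigr]\to\ln 2$ while the constant $m$-term divided by $n-m$ tends to $0$, so the bracket to the power $\frac{1}{n-m}$ tends to $2$ and $A_{c}^{nm}\to1$. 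The main obstacle is entirely the algebraic bookkeeping in the expansion step — chaining $u(\varepsilon)$, $u^{2}$, $u^{4}$, keeping the Pochhammer factors $s(s+1)$ and $s(s+1)(s+2)(s+3)$, the even-Taylor weights $1,\frac{1}{12}$, the $2^{2+s}-1$ and $2^{4+s}-1$ factors and the numerical constants $2^{5}$, $3\cdot2^{11}$, $-\frac16$ all straight; an equivalent but heavier route inserts the Riesz measure $d\mu_{f}(t)=t^{s/2-1}dt/\Gamma(s/2)$ directly into the formulas of Proposition~\ref{prop:expansion} and collapses the resulting lattice sums via $\sum_{j\in\Z}(j+\frac12)^{-2\nu}=2(2^{2\nu}-1)\zeta(2\nu)$.
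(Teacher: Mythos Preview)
Your proposal is correct, and in fact it takes a different route from the paper's own proof. The paper simply invokes Proposition~\ref{prop:expansion}: it plugs the Riesz measure $d\mu_f(t)=t^{s/2-1}dt/\Gamma(s/2)$ into the general integral formulas for $E_2(A)$ and $E_4(A)$ involving $\theta_2$ and its $t$-derivatives, then evaluates the resulting lattice sums. You instead bypass Proposition~\ref{prop:expansion} entirely and work directly with the closed Hurwitz-zeta representation (\ref{Riesz}) of Proposition~\ref{prop:enerbipLJ}, Taylor-expanding $\zeta(s,\tfrac12\pm u)$ in $u=\tfrac12\tanh(\varepsilon/2)$ and using $\partial_a^k\zeta(s,a)=(-1)^k\tfrac{\Gamma(s+k)}{\Gamma(s)}\zeta(s+k,a)$ together with $\zeta(s,\tfrac12)=(2^s-1)\zeta(s)$. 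What you call the ``equivalent but heavier route'' at the end of your proposal \emph{is} the paper's route. Your approach is more elementary and self-contained for the Riesz/LJ case---no theta functions, no integral representation---while the paper's approach has the advantage of flowing directly out of the general $f\in\mathcal F$ framework already established in Section~\ref{Sec2}. Your treatment of the transition point (factoring $E_2^{nm}$ and reading off positivity, monotonicity, and the unique sign change) and of the two limits $n\to m^+$, $n\to\infty$ is also slightly more explicit than the paper's, but follows the same logarithmic-derivative idea.
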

\begin{proof}
These are direct computations from the formulas given in Proposition
\ref{prop:expansion}.
The value of $A_c^{nm}$, namely the solution of $E_2^{nm}(A_c^{nm})=0$,
is easily computed from (\ref{rov1}).
Furthermore, $E_2^{nm}$ indeed exhibits a change of sign in the neighborhood
of $A_c^{nm}$ by a direct computation.
Moreover, we have, defining $f:t\mapsto (2^{2+t}-1)(1+t)\zeta(t+2)$,
which is an analytic function on $(1,+\infty)$,
$$
\lim_{n\to m^+} \frac{1}{2}\left[ \frac{f(n)}{f(m)} \right]^{\frac{1}{n-m}}
= \lim_{h\to 0^+} \frac{1}{2}\exp\left(\frac{1}{h}\ln\left(\frac{f(m+h)}{f(m)}
\right) \right) = \frac{1}{2}\exp\left(\frac{f'(m)}{f(m)} \right)
$$
which gives our result after straightforward computation.
We also easily get the last asymptotics as we get the one of $A_{\min}^{nm}$
as $n\to +\infty$ in Proposition \ref{prop:enerbipLJ}.
\end{proof}
\begin{example}[\textbf{Numerics of the optimal equidistant energy and the transitions points}]
The plots of the ground-state energy versus the mean lattice spacing $A$
for $m=6$ and two values of $n\in\{12,7\}$ are represented in figure \ref{f1}.
The ground state is equidistant for all $A\in(0,A_c]$ where the critical
values of $A_c$ indicate the second-order phase
transitions from the equidistant to bipartite ground states.
It is seen that the plots $E(A)$ are non-monotonous with just one global
minimum $A_{\min}^{nm}$ represented by the black circle/square for $n\in\{12,7\}$,
respectively, located in the equidistant part of the phase diagram. 
For the $(12,6)$ case one obtains $A_c^{12,6}\approx 1.108654785157924$ and for
$(7,6)$ one has $A_c^{7,6}\approx 1.1427384940215781$.
\end{example}

\begin{figure}[tbp]
\begin{center}
\includegraphics[clip,width=0.5\textwidth]{fig2.eps}
\caption{The ground-state energy $E$ as the function of the mean
lattice spacing $A$ for the 1D LJ model with $m=6$; the curves for $n=12$
and $n=7$ are represented by white circles and squares, respectively. The critical
values of $A_c$ are marked by crosses.
The black circle and square correspond to the absolute minimum of the energy
for $n=12$ and $n=7$, respectively. The dashed curves correspond to a
prolongation of the equidistant energy to the region $A>A_c^{nm}$
where it is not the ground state energy.}
\label{f1}
\end{center}
\end{figure}

\begin{remark}[\textbf{Numerical test of the mean-field asymptotics}]
The mean-field prediction of the singular behavior of the order
parameter (\ref{nontrivialeps}) close to the critical $A_c^{nm}$
can be tested numerically by using (\ref{adel}).
Since $\Delta\sim 1+\varepsilon$ for very small $\varepsilon$, 
we plot the numerical data for $\Delta-1$ versus $A-A_c^{nm}$ in
the logarithmic scale in figure \ref{f4}.
The linear fits of data are represented by solid lines, the slope $\beta$
equals to approximately $0.500081$ for the $(12,6)$ case and
to approximately $0.500091$ for the $(7,6)$ case, confirming
the mean-field critical exponent $\frac{1}{2}$.
\end{remark}

\begin{figure}[tbp]
\begin{center}
\includegraphics[clip,width=0.5\textwidth]{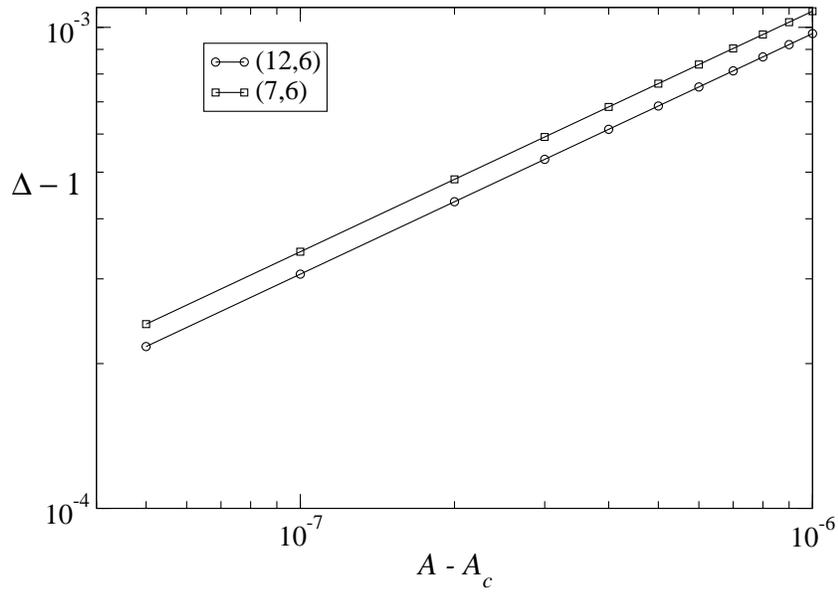}
\caption{Numerical data for $\Delta-1$ versus $A-A^{nm}_c$,
in the critical region $A-A^{nm}_c\ll 1$, for the LJ model with $m=6$,
white circles correspond to $n=12$ and white squares to $n=7$, in the
logarithmic scale.
The solid lines are linear fits of data, the slope $\beta$ approximately
equals to $0.500081$ for $n=12$ and to $0.500091$ for $n=7$.}
\label{f4}
\end{center}
\end{figure}

\subsection{Transition between the equidistant and bipartite chains}
\label{Sec33}
Using (\ref{astar}) we can easily find the simple phase diagram $[n,A]$
with fixed $m=6$, plotted in figure \ref{f3}. 
Here the line $n$ vs. $A_c^{n6}$ marks the border between equidistant
$\Delta=1$ and non-equidistant $\Delta\ne 1$ phases.
In the limit $n\to\infty$ we approach the hard-core limit with unit radius
and $A_c\to 1^+$ as expected.

\begin{figure}[tbp]
\begin{center}
\includegraphics[clip,width=0.5\textwidth]{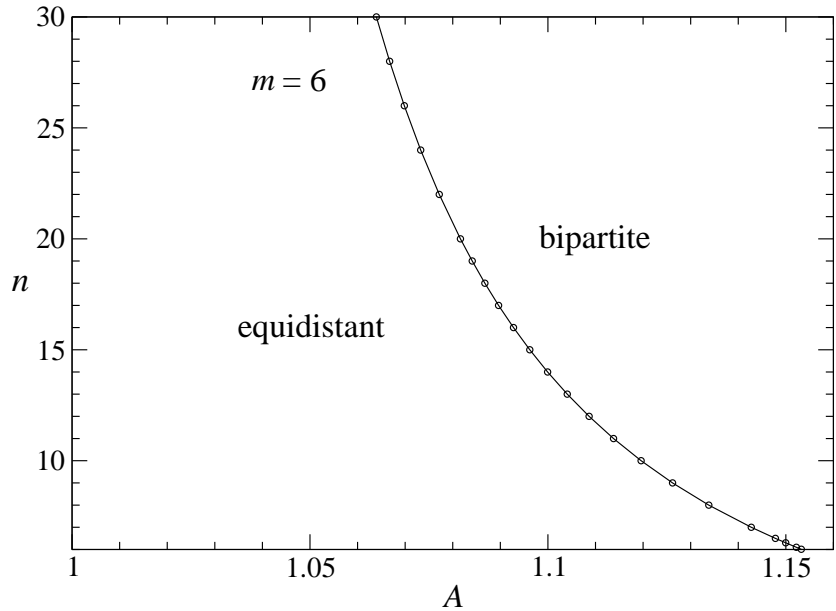}
\caption{The phase diagram of the 1D LJ model with $m=6$, in the $(A,n)$
($n>6$) plane.
The solid curve represents the phase boundary of critical points $A_c^{n6}$
between the equidistant and bipartite ground states.}
\label{f3}
\end{center}
\end{figure}

\begin{prop}[\textbf{Absence of tricritical point}]
Let $n>m$, then there is no tricritical solution $A^t>0$ to the equations
$E_2^{nm}(A^t)=E_4^{nm}(A^t)=0$, i.e. all phase transitions in
the 1D $(n,m)$ LJ model are of second order.
\end{prop}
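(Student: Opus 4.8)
The plan is to eliminate $A^t$ from the two tricritical equations and reduce the question to the strict monotonicity of a single function of one variable. Suppose, for contradiction, that some $A^t>0$ satisfies $E_2^{nm}(A^t)=E_4^{nm}(A^t)=0$. The right-hand side of (\ref{rov2}) has the form $C\,[\,\cdots\,]-\frac{1}{6}E_2^{nm}(A)$ with the nonzero constant $C=\frac{nm}{3(n-m)2^{11}}$ and a square-bracketed term $[\,\cdots\,]$; since $E_2^{nm}(A^t)=0$, the vanishing of $E_4^{nm}(A^t)$ forces $[\,\cdots\,]=0$ at $A=A^t$. Introducing, for $s>1$,
$$
Y_s:=(2^{2+s}-1)(1+s)\zeta(s+2),\qquad X_s:=(2^{4+s}-1)(1+s)(2+s)(3+s)\zeta(s+4),
$$
which are both positive, the equations $E_2^{nm}(A^t)=0$ and ``$[\,\cdots\,]=0$'' read precisely $\frac{Y_n}{(2A^t)^n}=\frac{Y_m}{(2A^t)^m}$ and $\frac{X_n}{(2A^t)^n}=\frac{X_m}{(2A^t)^m}$. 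Taking the ratio of these two identities makes all dependence on $A^t$ cancel and leaves $X_n/Y_n=X_m/Y_m$, i.e. $\phi(n)=\phi(m)$, where
$$
\phi(s):=\frac{X_s}{Y_s}=\frac{2^{4+s}-1}{2^{2+s}-1}\,(2+s)(3+s)\,\frac{\zeta(s+4)}{\zeta(s+2)}.
$$
It therefore suffices to prove that $\phi$ is strictly monotone on $(1,+\infty)$, which contradicts $n>m$.

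To establish monotonicity I would show $(\ln\phi)'(s)>0$ on $(1,+\infty)$, treating the three factors $h_1(s)=\frac{2^{4+s}-1}{2^{2+s}-1}$, $h_2(s)=(2+s)(3+s)$ and $h_3(s)=\zeta(s+4)/\zeta(s+2)$ separately. Since $h_1(s)=4+\frac{3}{2^{2+s}-1}$, a direct computation gives
$$
(\ln h_1)'(s)=-\frac{3\cdot 2^{2+s}\ln 2}{(2^{2+s}-1)(2^{4+s}-1)},\qquad (\ln h_2)'(s)=\frac{1}{s+2}+\frac{1}{s+3}.
$$
Using $2^{4+s}-1\ge\frac{3}{4}\,2^{4+s}$ (valid because $2^{4+s}\ge 4$) one obtains the bound $|(\ln h_1)'(s)|\le\frac{\ln 2}{2^{2+s}-1}$, and the elementary inequality $2^{2+s}-1>(s+3)\ln 2$ on $(1,+\infty)$ --- it holds at $s=1$, where it reads $7>4\ln 2$, and persists since the difference of the two sides is increasing --- then yields $|(\ln h_1)'(s)|<\frac{1}{s+3}<(\ln h_2)'(s)$, hence $(\ln h_1)'(s)+(\ln h_2)'(s)>0$.

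For the factor $h_3$, I would invoke the log-convexity of $\zeta$ on $(1,+\infty)$. Writing $\zeta(t)=\sum_{k\ge 1}k^{-t}$ one has $\zeta'(t)=-\sum_k(\ln k)k^{-t}$ and $\zeta''(t)=\sum_k(\ln k)^2 k^{-t}$, and the Cauchy--Schwarz inequality gives $\zeta'(t)^2\le\zeta(t)\zeta''(t)$, so $(\ln\zeta)''\ge 0$ and $(\ln\zeta)'$ is nondecreasing. Consequently $(\ln h_3)'(s)=(\ln\zeta)'(s+4)-(\ln\zeta)'(s+2)\ge 0$. Adding the three contributions gives $(\ln\phi)'(s)>0$ on $(1,+\infty)$, so $\phi$ is strictly increasing and $\phi(n)=\phi(m)$ is impossible for $n>m$; hence there is no tricritical solution $A^t$, and all transitions are of second order.

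The only genuinely delicate point is the pair of elementary estimates in the second step: since all three logarithmic derivatives decay to $0$ as $s\to+\infty$, the positive contribution of $h_2$ prevails only once one exploits the exponential smallness of $|(\ln h_1)'|$, which is precisely what the bound $|(\ln h_1)'(s)|\le\frac{\ln 2}{2^{2+s}-1}$ captures; near $s=1$ one instead uses that $(\ln h_2)'$ is comfortably bounded below. Everything else --- rewriting the two tricritical equations, cancelling $A^t$, and the log-convexity of $\zeta$ --- is routine.
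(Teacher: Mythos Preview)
Your overall strategy matches the paper's: eliminate $A^t$ from the two tricritical equations and reduce to the strict monotonicity of a single function of one variable --- the paper works with $g:=1/\phi$ and wants it strictly decreasing, which is equivalent to your claim that $\phi$ is strictly increasing. The monotonicity argument, however, differs, and yours is the more careful one. The paper factors
\[
g(s)=\frac{2^{2+s}-1}{2^{4+s}-1}\cdot\frac{\zeta(s+2)}{(2+s)(3+s)\zeta(s+4)}
\]
and asserts that both factors are strictly decreasing on $(1,\infty)$. The first of these claims is actually false: since $h_1(s)=\frac{2^{4+s}-1}{2^{2+s}-1}=4+\frac{3}{2^{2+s}-1}$ is decreasing (as you note), its reciprocal is increasing --- numerically, $\frac{2^{2+s}-1}{2^{4+s}-1}$ rises from $7/31$ at $s=1$ toward $1/4$ as $s\to\infty$. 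Your argument supplies the genuine fix: the bound $|(\ln h_1)'(s)|\le\frac{\ln 2}{2^{2+s}-1}$ together with $2^{2+s}-1>(s+3)\ln 2$ shows the wrong-way factor is dominated by the polynomial factor $h_2$, and your Cauchy--Schwarz proof of the log-convexity of $\zeta$ justifies the monotonicity of $h_3$ (which the paper also merely asserts). Your proof is correct and in fact repairs a gap in the paper's own argument.
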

\begin{proof}
Formula (\ref{astar}) tells us that 
$$
E_2^{nm}(A^t)=0 \iff A^t = \frac{1}{2} 
\left[ \frac{(2^{2+n}-1)(1+n)\zeta(n+2)}{(2^{2+m}-1)(1+m)\zeta(m+2)}
\right]^{\frac{1}{n-m}}.
$$
On the other hand, defining $u(t)=(2^{4+t}-1)(1+t)(2+t)(3+t)\zeta(t+4)$,
we get from (\ref{rov2}) that
$$
E_4^{nm}(A^t) = \frac{C}{(2A^t)^m} \left[\frac{u(n)}{(2A_0)^{n-m}}-u(m) \right],
\quad C:=\frac{nm}{3\times 2^{11+m}(n-m)}.
$$
Let us assume that $E_4^{nm}(A^t)=0$, then we would have
$$
A^t=\frac{1}{2}\left[ \frac{u(n)}{u(m)} \right]^{\frac{1}{n-m}} .
$$
The two above formulas for $A^t$ are equivalent when
$$
\frac{(2^{2+n}-1)(1+n)\zeta(n+2)}{(2^{2+m}-1)(1+m)\zeta(m+2)}
=\frac{(2^{4+n}-1)(1+n)(2+n)(3+n)\zeta(n+4)}{(2^{4+m}-1)(1+m)(2+m)(3+m)\zeta(m+4)}
$$
that is to say
$$
g(n):=\frac{(2^{2+n}-1)\zeta(n+2)}{(2^{4+n}-1)(2+n)(3+n)\zeta(n+4)}=g(m).
$$
Studying the variation of $g:(1,+\infty) \to \R$, we find that $g$ is strictly
decreasing, which means that there is no couple $(n,m)$ with $m<n$ such
that $g(m)=g(n)$.
This holds indeed because:
\begin{itemize}
\item $x\mapsto \displaystyle\frac{2^{2+n}-1}{2^{4+n}-1}$ is a strictly
decreasing function on $(1,+\infty)$;
\item $x\mapsto \displaystyle\frac{\zeta(x+2)}{(2+x)(3+x)\zeta(x+4)}$ is also
a strictly decreasing function on  $(1,+\infty)$, since the positive functions
$x\mapsto \displaystyle\frac{\zeta(x+2)}{\zeta(x+4)}$ and
$x\mapsto \displaystyle\frac{1}{(2+x)(3+x)} $ are strictly decreasing on
the same interval.
\end{itemize}
\end{proof}

We now give a result concerning the critical points of
$\Delta\mapsto  E^{\rm bip}_{nm}(A,\Delta)$ at fixed $A>0$.

\begin{prop}[\textbf{Bipartite phase and the asymptotics $A\to\infty$}]
The region of the bipartite phase is defined by $A>A_c^{nm}$:  
\begin{itemize}
\item[(1)]
for a given $A>A_c^{nm}$, there exists a nontrivial value of the parameter
$\Delta(A)\ne 1$ given by the equation
\begin{equation} \label{adel}
A^{m-n}=\frac{2^{n-m}[\zeta(m+1,1-\delta)-\zeta(m+1,\delta)]}{
\zeta(n+1,1-\delta)-\zeta(n+1,\delta)},\quad \delta=\frac{1}{1+\Delta}.
\end{equation}
\item[(2)] as $A\to +\infty$, we have
\begin{equation} \label{deltaas}
\Delta = 2A - 1 + o(1)
\end{equation}
regardless of the values of the LJ parameters $(n,m)$.
\end{itemize}
\end{prop}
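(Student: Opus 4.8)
The plan is to reduce both claims to the critical-point equation obtained by differentiating the closed form (\ref{energyy})--(\ref{Riesz}) of the energy in the variable $\delta:=1/(1+\Delta)\in(0,1)$, for which $\Delta/(1+\Delta)=1-\delta$. Using $\partial_a\zeta(s,a)=-s\,\zeta(s+1,a)$ and the chain rule, I would first record
\[
\frac{\partial}{\partial\delta}E_{nm}^{\rm bip}=\frac{mn}{2(n-m)}\left[\frac{\zeta(n+1,1-\delta)-\zeta(n+1,\delta)}{(2A)^n}-\frac{\zeta(m+1,1-\delta)-\zeta(m+1,\delta)}{(2A)^m}\right].
\]
Since $\partial_\Delta\delta=-\delta^2\neq0$, an interior critical point of $\Delta\mapsto E_{nm}^{\rm bip}(A,\Delta)$ with $\Delta\neq1$ is exactly a solution of (\ref{adel}). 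Writing $N(\delta)=\zeta(m+1,1-\delta)-\zeta(m+1,\delta)$, $D(\delta)=\zeta(n+1,1-\delta)-\zeta(n+1,\delta)$ and $\Phi(\delta)=2^{n-m}N(\delta)/D(\delta)$ for the right-hand side of (\ref{adel}), and using that each $a\mapsto\zeta(s,a)$ is strictly decreasing, $N$ and $D$ are both negative on $(0,\frac12)$ and both change sign under $\delta\mapsto1-\delta$; hence $\Phi$ is continuous, positive and invariant under $\delta\mapsto1-\delta$, so it suffices to work on $(0,\frac12)$, which corresponds to $\Delta>1$.

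For part (1) I would argue by minimization. Since $f_{nm}(r)\to+\infty$ as $r\to0^+$, the energy per particle $\delta\mapsto E_{nm}^{\rm bip}(A,\delta)$ is continuous on $(0,1)$ and diverges to $+\infty$ as $\delta\to0^+$ or $\delta\to1^-$, where one of the bond lengths $a=2A(1-\delta)$, $b=2A\delta$ collapses; hence it attains a global minimum at an interior critical point. For $A>A_c^{nm}$ one has $E_2^{nm}(A)<0$ (immediate from the sign of the bracket in (\ref{rov1})), so the expansion $E_{nm}^{\rm bip}(A,e^\varepsilon)=E_{nm}^{\rm eq}(A)+E_2^{nm}(A)\varepsilon^2+O(\varepsilon^4)$ shows that $\delta=\frac12$ (i.e. $\varepsilon=0$) is a strict local maximum of $\varepsilon\mapsto E_{nm}^{\rm bip}(A,e^\varepsilon)$ and hence not the minimizer. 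The minimizer is therefore a nontrivial critical point, i.e. a solution of (\ref{adel}), and after the $\delta\mapsto1-\delta$ symmetry we may take $\delta(A)\in(0,\frac12)$, $\Delta(A)>1$; uniqueness of this branch would follow from strict monotonicity of $\Phi$ on $(0,\frac12)$, discussed below.

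For part (2), fix any solution $\delta(A)\in(0,\frac12)$ of (\ref{adel}). First $\delta(A)\to0$ as $A\to+\infty$: the left-hand side $A^{m-n}\to0$, whereas $\Phi$ is positive and bounded below by a positive constant on each interval $[\eta,\frac12]$, so $\delta(A)$ cannot stay away from $0$. Then, expanding as $\delta\to0^+$, $\zeta(s,\delta)=\delta^{-s}+\zeta(s)-s\,\zeta(s+1)\delta+O(\delta^2)$ and $\zeta(s,1-\delta)=\zeta(s)+s\,\zeta(s+1)\delta+O(\delta^2)$, one gets $N(\delta)=-\delta^{-(m+1)}(1+O(\delta^{m+2}))$, $D(\delta)=-\delta^{-(n+1)}(1+O(\delta^{n+2}))$, hence $\Phi(\delta)=(2\delta)^{n-m}(1+O(\delta^{m+2}))$. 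Substituting into (\ref{adel}) and taking $(n-m)$-th roots yields $2\delta(A)=A^{-1}(1+O(\delta^{m+2}))=A^{-1}+O(A^{-(m+3)})$ (using $\delta(A)=O(A^{-1})$), so that $\Delta(A)=\delta(A)^{-1}-1=2A(1+O(A^{-(m+2)}))-1=2A-1+O(A^{-(m+1)})$, which is the asserted $2A-1+o(1)$ and is manifestly independent of $(n,m)$.

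The step I expect to be the main obstacle is the strict monotonicity of $\Phi$ on $(0,\frac12)$ needed for a uniqueness statement about the branch $\Delta(A)$. Here one has $\Phi(0^+)=0$, while a Taylor expansion of $N,D$ at $\delta=\frac12$ combined with $\zeta(s,\frac12)=(2^s-1)\zeta(s)$ gives $\lim_{\delta\to1/2}\Phi(\delta)=2^{n-m}\frac{(1+m)(2^{m+2}-1)\zeta(m+2)}{(1+n)(2^{n+2}-1)\zeta(n+2)}$, which equals $(A_c^{nm})^{m-n}$ by (\ref{astar}); monotonicity in between should be extracted from the sign of $(\ln\Phi)'=(\ln N)'-(\ln D)'$, for instance via the integral representation $\zeta(s,\delta)-\zeta(s,1-\delta)=\Gamma(s)^{-1}\int_0^\infty t^{s-1}\frac{e^{-\delta t}-e^{-(1-\delta)t}}{1-e^{-t}}\,dt$. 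Existence and the large-$A$ asymptotics themselves are comparatively routine once the energy derivative is in the zeta-function form above, the only care being to keep track of the orders of the error terms and to confirm $\delta(A)\to0$ before inverting.
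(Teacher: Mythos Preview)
Your approach is essentially the same as the paper's: differentiate the closed form (\ref{energyy})--(\ref{Riesz}) to obtain the critical-point equation in terms of Hurwitz zeta functions, then expand these for small $\delta$ to extract the asymptotic $\Delta=2A-1+o(1)$. Your treatment of part~(1) is in fact more careful than the paper's---the paper simply asserts that $\Delta=1$ is an energy maximum for $A>A_c^{nm}$ and that the nontrivial root of (\ref{adel}) is the minimum, whereas you justify existence via boundary blow-up of the energy together with the sign of $E_2^{nm}(A)$; and your discussion of uniqueness through the monotonicity of $\Phi$ on $(0,\tfrac12)$ (with the correct endpoint value $\Phi(\tfrac12^-)=(A_c^{nm})^{m-n}$) goes beyond what the paper attempts to prove.
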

\begin{proof}
The value of $\Delta$ is determined by the minimization condition for
the energy    
\begin{equation}
\frac{\partial}{\partial\Delta} E^{\rm bip}_{nm}(A,\Delta) = 0
\end{equation}
which gives
\begin{eqnarray}
\frac{1}{(2 A)^n} \left[ \zeta\left( n+1,\frac{1}{1+\Delta}\right) -
\zeta\left( n+1,\frac{\Delta}{1+\Delta}\right) \right] & & \nonumber \\
- \frac{1}{(2 A)^m} \left[ \zeta\left( m+1,\frac{1}{1+\Delta}\right) -
\zeta\left( m+1,\frac{\Delta}{1+\Delta}\right) \right] & = & 0 .   
\end{eqnarray}
It is clear that $\Delta=1$, which corresponds to the equidistant ground state,
always satisfies this equation; it provides the true energy minimum in the
region $0<A<A_c^{nm}$.
On the other hand, the trivial $\Delta=1$ corresponds to the energy maximum
in the bipartite region $A>A_c^{nm}$ where the nontrivial solution (\ref{adel})
with $\Delta>1$ provides the minimum of the energy.\\
We remark that in the limit $A\to +\infty$ it follows that, since $n>m$,
$$
\lim_{A\to +\infty} \frac{\zeta(m+1,1-\delta(A))-\zeta(m+1,\delta(A))}{
\zeta(n+1,1-\delta(A))-\zeta(n+1,\delta(A))}=0,\quad \delta(A)
=\frac{1}{1+\Delta(A)}
$$
which implies that $\displaystyle\lim_{A\to +\infty} \delta(A)=0$
(if not, the above limit cannot be $0$), and thus
$\displaystyle\lim_{A\to +\infty} \Delta(A)=+\infty$.
Therefore, in this $A\to +\infty$ regime, one can expand
the right-hand side of equation (\ref{adel}) in powers of small $\delta$
by using the expansion formulas
\begin{eqnarray} 
\zeta(m+1,\delta) = \frac{1}{\delta^{m+1}}+\zeta(m+1)-(m+1)\zeta(m+2)\delta
+ O\left( \delta^2\right) , \nonumber \\
\zeta(m+1,1-\delta) =  \zeta(m+1)+ (m+1) \zeta(m+1)\delta
+ O\left( \delta^2\right) , \label{zetaexp} 
\end{eqnarray}
implying
\begin{equation}
\zeta(m+1,1-\delta) - \zeta(m+1,\delta) = - \frac{1}{\delta^{m+1}}
+ 2 (m+1) \zeta(m+2) \delta + O\left( \delta^2\right) .
\end{equation}
The relation (\ref{adel}) thus yields
\begin{equation}
\left( \frac{2A}{1+\Delta} \right)^{m-n} = 1 +
O\left( \frac{1}{\Delta^{(3+m)}} \right)
\end{equation}  
which shows (\ref{deltaas}).
\end{proof}

\begin{figure}[tbp]
\begin{center}
\includegraphics[clip,width=0.5\textwidth]{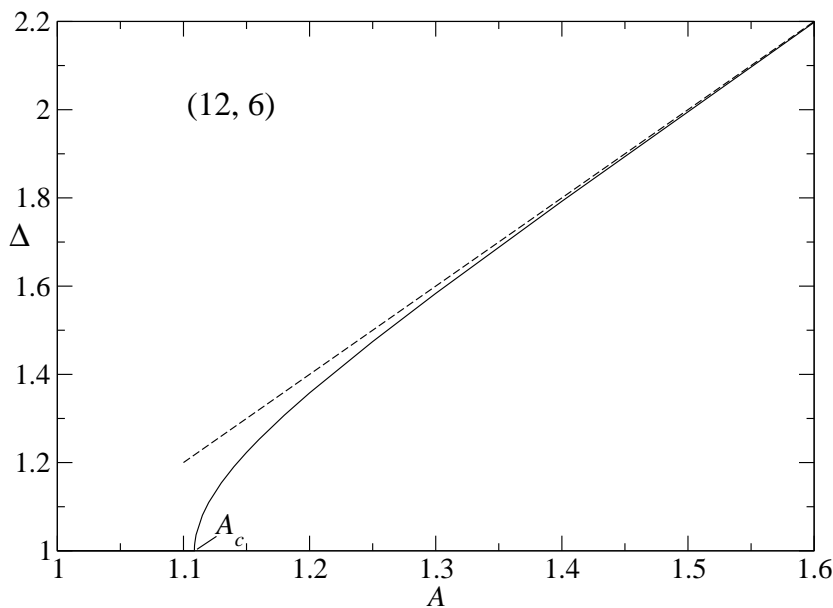}
\caption{The plot of the bipartite parameter $\Delta$ versus
$A$ for the 1D $(12,6)$ LJ model.
The equidistant value of $\Delta=1$ passes continuously into
the nontrivial one at the critical $A_c\equiv A_c^{12,6}$. 
The dashed line is the large-$A$ asymptote $\Delta\sim 2A-1$.}
\label{f5}
\end{center}
\end{figure}

\begin{example}[\textbf{Numerical checking of the critical points' asymptotics}]
For the LJ parameters $n=12$ and $m=6$, by using the relation (\ref{adel}),
the bipartite parameter $\Delta$ versus the inverse particle density $A$
is plotted in figure \ref{f5}.
The equidistant value of $\Delta=1$ passes continuously into
the nontrivial one at the critical $A_c^{12,6}$ given by (\ref{astar})
as follows
\begin{equation}
A_c^{12,6} = \frac{\pi}{2^{2/3}\sqrt{3}}
\left( \frac{5461}{6545} \right)^{1/6} \approx 1.10865478515.
\end{equation}
The dashed line is the large-$A$ asymptote $\Delta\sim 2A-1$
which is in fact the upper bound for $\Delta$ as the function of $A$.
Note a quick tendency of the plot $\Delta(A)$ to this asymptote for
relatively small values of $A$.
\end{example}

\renewcommand{\theequation}{4.\arabic{equation}}
\setcounter{equation}{0}

\section{Inclusion of the hard-core to 1D LJ model} \label{Sec4}
It was already mentioned above that in the limit $n\to\infty$ the Mie
interaction potential (\ref{Mie}) diverges for distances $r<1$,
thus creating a hard-core with radius 1.
Now we introduce for finite values of $n$ the true hard-core
with radius $\sigma>0$, defining the potential
\begin{equation} \label{fhc2}
f_{nm}^{\rm hc}(r) = \left\{
\begin{array}{ll}
+\infty & \mbox{if $r<\sigma$,} \cr
f_{nm}(r) & \mbox{if $r\ge \sigma$.}
\end{array}  
\right.  
\end{equation}
With respect to the symmetry $\Delta\leftrightarrow 1/\Delta$ of the bipartite
chain, we can restrict ourselves to $\Delta\ge 1$.
For a given bipartite chain, the smallest of alternating distances
(\ref{periods}), in our case $b(A,\Delta)$, must be greater than
$\sigma$, i.e. $\sigma\le \displaystyle\frac{2A}{1+\Delta}$.
The interval of admissible $\Delta$ values then becomes
\begin{equation} \label{deltain}
1 \le \Delta \le\frac{2A}{\sigma}-1 .
\end{equation}
These inequalities also imply the obvious condition
\begin{equation} \label{obm}
A\ge \sigma .
\end{equation}

\begin{itemize}
\item
If $\sigma\le 1$, the restriction (\ref{deltain}) is always fulfilled
and the added hard-core has virtually no influence on the dependence
$\Delta(A)$, except for the new condition $A\ge \sigma$.
\item
If $1\le \sigma \le A_c$, there are three possibilities for the values of $A$.
\\ (i)
For $\sigma \le A \le A_c$, we have the trivial solution $\Delta=1$ as in
the model without hard-cores.
\\ (ii)
For $A_c<A<A^*$, the plot of nontrivial $\Delta(A)$ follows the parabolic
trajectory given by equation (\ref{adel}), see also figure \ref{f5},
up to the point with the coordinates $(\Delta^*,A^*)$ given by
\begin{equation} \label{eq1}
\Delta^* = \frac{2A^*}{\sigma} - 1
\end{equation}
as this is the upper bound for $\Delta$ from the inequality (\ref{deltain}).
Note that the second relation for the coordinates $(\Delta^*,A^*)$
comes from (\ref{adel}) and reads as
\begin{equation} \label{eq2}
(A^*)^{m-n}=\frac{2^{n-m}[\zeta(m+1,1-\delta^*)-\zeta(m+1,\delta^*)]}{
\zeta(n+1,1-\delta^*)-\zeta(n+1,\delta^*)},
\end{equation}
where the notation $\delta^*=1/(1+\Delta^*)$ is made.
\\ (iii)
Finally, for $A>A^*$ the linear dependence
\begin{equation} \label{linear}
\Delta = \frac{2A}{\sigma} - 1
\end{equation}
is on, with $\Delta$ lying on the upper border of
the inequality (\ref{deltain}).
\item
If $\sigma\ge A_c$, the linear plot $\Delta=2A/\sigma-1$ starts
from $A=\sigma$, see the inequality (\ref{obm}).
\end{itemize}

\begin{figure}[tbp]
\begin{center}
\includegraphics[clip,width=0.5\textwidth]{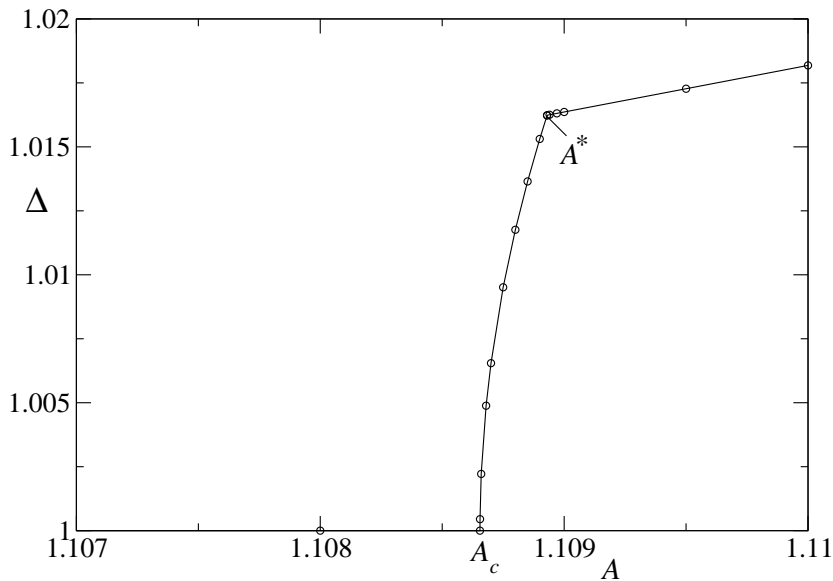}
\caption{The plot of the bipartite parameter $\Delta$ versus
$A$ for the 1D $(12,6)$ LJ model with the hard-core radius $\sigma=1.1$.
The plot is the same as the no-hard-core one in figure \ref{f5}
in the interval $\sigma<A<A^*$ with $A^*$ given by equations
(\ref{eq1}) and (\ref{eq2}).
For $A>A^*$, the linear dependence (\ref{linear}) is on.}
\label{f6}
\end{center}
\end{figure}

Let us take the hard-core radius $\sigma=1.1$ and perform the numerical
minimization for the standard $(12,6)$ LJ model whose dependence of
$\Delta$ on $A$ for the no-hard-core case is pictured in figure \ref{f5}.
The results, presented in figure \ref{f6}, illustrate the above scenario:
The dependence $\Delta(A)$ is the same as in figure \ref{f5} up to
the inverse particle density $A^*$ beyond which it is linear of type
(\ref{linear}).

It turns out that the value of $A^*$ diverges when $\sigma\to 1^+$.
\begin{prop}[\textbf{Asymptotic behavior of $A^*$ as $\sigma\to 1^+$}]
We have, as $\sigma\to 1^+$,
\begin{equation} \label{Delta}
A^*\sim\frac{1}{2} \left[ \frac{2(m+1)\zeta(m+2)}{(n-m)}\right]^{\frac{1}{m+2}}
(\sigma-1)^{-\frac{1}{m+2}}.
\end{equation}
\end{prop}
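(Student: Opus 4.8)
The strategy is to reduce the two defining equations (\ref{eq1})--(\ref{eq2}) to a single scalar relation for $\delta^*=1/(1+\Delta^*)$ and then feed in the small-$\delta$ asymptotics of the Hurwitz zeta function from (\ref{zetaexp}). Equation (\ref{eq1}) reads $\delta^*=\sigma/(2A^*)$, i.e.\ $2A^*\delta^*=\sigma$; substituting $A^*=\sigma/(2\delta^*)$ into (\ref{eq2}) and simplifying the powers of $2$ turns it into
\[
\sigma^{\,m-n}=(\delta^*)^{\,m-n}\,
\frac{\zeta(m+1,1-\delta^*)-\zeta(m+1,\delta^*)}{\zeta(n+1,1-\delta^*)-\zeta(n+1,\delta^*)},
\]
so that the whole problem becomes the analysis of the right-hand side as a function of the single variable $\delta^*$.

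First I would show that $A^*\to+\infty$, equivalently $\delta^*\to0$, as $\sigma\to1^+$. For $\sigma$ close to $1$ one is in the regime $1\le\sigma<A_c$ where $A^*>A_c$, hence $\delta^*=\sigma/(2A^*)<\sigma/(2A_c)$ stays in a compact subinterval of $[0,\frac12)$, and it is enough to exclude any subsequential limit $\delta_0\in(0,\frac12)$. If $\delta^*\to\delta_0>0$ along a sequence $\sigma_k\to1^+$, then passing to the limit in the displayed relation (whose right-hand side is continuous at $\delta_0$) gives $\delta_0^{\,n-m}[\zeta(n+1,1-\delta_0)-\zeta(n+1,\delta_0)]=\zeta(m+1,1-\delta_0)-\zeta(m+1,\delta_0)$, which is precisely (\ref{adel}) evaluated at $A=1/(2\delta_0)$ with $\Delta=2A-1$; this contradicts the strict inequality $\Delta(A)<2A-1$ for the free bipartite curve, the asymptote in (\ref{deltaas}) being an upper bound. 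Equivalently and more elementarily: the constraint line $\Delta=2A/\sigma-1$ lies a vertical distance $2A(\sigma-1)/\sigma$ below the asymptote $\Delta=2A-1$ that $\Delta(A)$ approaches from below, so their intersection $A^*$ must escape to $+\infty$ as $\sigma\to1^+$. Hence $\delta^*\to0$.

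With $\delta^*\to0$ in hand, I would insert the expansion $\zeta(s+1,1-\delta)-\zeta(s+1,\delta)=-\delta^{-(s+1)}+2(s+1)\zeta(s+2)\delta+O(\delta^2)$ (displayed just after (\ref{zetaexp})) for $s=m$ and $s=n$; factoring $-\delta^{-(m+1)}$ and $-\delta^{-(n+1)}$ out of numerator and denominator, and noting that the correction coming from the denominator is of order $\delta^{\,n+2}=o(\delta^{\,m+2})$ since $n>m$, one obtains
\[
\frac{\zeta(m+1,1-\delta)-\zeta(m+1,\delta)}{\zeta(n+1,1-\delta)-\zeta(n+1,\delta)}
=\delta^{\,n-m}\Bigl[1-2(m+1)\zeta(m+2)\,\delta^{\,m+2}+o\bigl(\delta^{\,m+2}\bigr)\Bigr].
\]
Substituting this into the scalar relation and cancelling the factor $(\delta^*)^{\,m-n}(\delta^*)^{\,n-m}=1$ leaves
\[
\sigma^{\,m-n}=1-2(m+1)\zeta(m+2)\,(\delta^*)^{\,m+2}+o\bigl((\delta^*)^{\,m+2}\bigr);
\]
expanding $\sigma^{\,m-n}=(1+(\sigma-1))^{-(n-m)}=1-(n-m)(\sigma-1)+O((\sigma-1)^2)$ and matching gives $(\delta^*)^{\,m+2}=\frac{(n-m)(\sigma-1)}{2(m+1)\zeta(m+2)}(1+o(1))$, hence $\delta^*\sim\bigl[\frac{(n-m)(\sigma-1)}{2(m+1)\zeta(m+2)}\bigr]^{1/(m+2)}$, and $A^*=\sigma/(2\delta^*)\sim1/(2\delta^*)$ yields exactly (\ref{Delta}). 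The one genuinely delicate point is the first step — the a priori divergence of $A^*$ — since the expansions only become meaningful once $\delta^*$ is known to be small; the subsequent matching of asymptotic series is routine.
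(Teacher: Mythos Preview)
Your proof is correct and follows essentially the same route as the paper: substitute $A^*=\sigma/(2\delta^*)$ from (\ref{eq1}) into (\ref{eq2}), use the small-$\delta$ expansions (\ref{zetaexp}) to reduce to $\sigma^{m-n}=1-2(m+1)\zeta(m+2)(\delta^*)^{m+2}+o((\delta^*)^{m+2})$, expand $\sigma^{m-n}$ linearly in $\sigma-1$, and read off the exponent. The only substantive difference is that you supply an a priori argument for $A^*\to+\infty$ (via subsequential limits and the strict upper-bound property of the asymptote $2A-1$), whereas the paper simply posits a power-law ansatz $A^*\sim(\sigma-1)^{-\tau}$ and determines $\tau$ self-consistently; your version is the more careful of the two on precisely the point you flag as delicate.
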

\begin{remark}[\textbf{Non-universality of the critical exponent}]
This means that the critical exponent $\tau=1/(m+2)$ is non-universal,
depending on the LJ parameter $m$, while the prefactor depends on both LJ
parameters $(n,m)$.
\end{remark}
\begin{proof}
We introduce another exponent $\tau$ such that $A^*$ is of order
$(\sigma-1)^{-\tau}$ as $\sigma\to 1^+$.
Let us now derive $\tau$ analytically.\\
It follows from equation (\ref{eq1}) that
\begin{equation} \label{deltastar}
\delta^* = \frac{1}{1+\Delta^*} = \frac{\sigma}{2A^*} ,
\end{equation}  
so that $\delta^*$ goes to 0 as $A^*\to \infty$.
Expressing from (\ref{deltastar}) $A^*=\sigma/(2\delta^*)$ and
using the small-$\delta^*$ expansions of the Hurwitz functions
(\ref{zetaexp}), the relation (\ref{eq2}) implies that
\begin{equation} \label{a0}
\sigma^{m-n} = \frac{1-2(m+1) \zeta(m+2){\delta^*}^{m+2}}{1-2(n+1)
\zeta(n+2){\delta^*}^{n+2}} +o(1), \qquad \delta^* \to 0 .
\end{equation}
As $\sigma-1$ is very small, we expand
$\sigma^{m-n}= 1+(m-n)(\sigma-1)+o(\sigma-1)$.
Since $n>m$, the term ${\delta^*}^{n+2}$ can be neglected compared to
${\delta^*}^{m+2}$ for infinitesimal $\delta^*$ in (\ref{a0}) and one gets
\begin{equation} \label{delta}
(n-m)(\sigma-1) = 2(m+1)\zeta(m+2) {\delta^*}^{m+2}+o(\sigma-1) ,
\qquad \sigma\to 1^+ .
\end{equation}
Using the relation (\ref{deltastar}), one gets the result.
\end{proof}

\begin{figure}[tbp]
\begin{center}
\includegraphics[clip,width=0.5\textwidth]{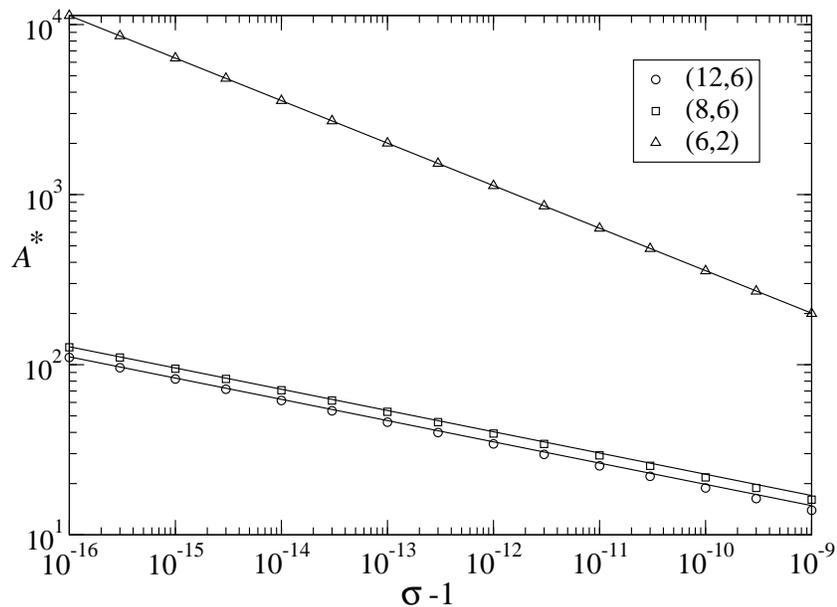}
\caption{The results for $A^*$ as the function of $\sigma-1$ for $\sigma$
very close to 1 from above, in the logarithmic scale.
Numerical data for the LJ parameters $(12,6)$ are represented by circles,
for $(8,6)$ by squares and for $(6,2)$ by triangles.
The corresponding linear plots deduced from the exact result (\ref{Delta})
are represented by solid lines.}
\label{f7}
\end{center}
\end{figure}

\begin{example}[\textbf{Numerical checking of the critical exponent}]
Numerical results for $A^*$ as the function of $\sigma-1$ for $\sigma$
very close to 1 from above, obtained by solving the set of equations
(\ref{eq1}) and (\ref{eq2}), are pictured in the logarithmic scale
in figure \ref{f7} for the LJ model with various $(n,m)$ parameters.
Numerical data for the LJ parameters $(12,6)$ are represented by circles,
for $(8,6)$ by squares and for $(6,2)$ by triangles.
The corresponding theoretical results deduced from (\ref{Delta}),
linear plots represented by solid lines, agree well with numerical data,
with tiny deviations for $\sigma-1$ of order or larger than $10^{-9}$.  
The slopes of data for the two cases with the same value of $m=6$,
namely $-0.12618$ for $(12,6)$ and $-0.12603$ for $(8,6)$, are close to
$-\frac{1}{8}$ predicted by the theory.
The corresponding prefactors from numerical fits are $1.10181$ and 
$1.26035$, agree well with the theoretical values $1.112290$ and
$1.276022$ obtained from (\ref{Delta}).
For the $(6,2)$ case we got $\tau=0.250026$, very close to the expected
$\frac{1}{4}$ and the prefactor $1.12763$ which is not far from
the predicted $1.128787$.
We tested the formula (\ref{Delta}) also for the $(6,3)$ LJ model,
the numerical value of $\tau=0.199995$ agrees with
the expected $\frac{1}{5}$.
\end{example}


%
%

\ack
The support received from the project EXSES APVV-20-0150 and the grant VEGA
No. 2/0089/24 is acknowledged. We also thank the anonymous referee for her/his useful suggestions and comments.

\section*{References}

\end{document}